\numberwithin{equation}{section}
\newtheorem{Theorem}{Theorem}[section]
\newtheorem*{Theorem*}{Theorem}
\newtheorem{Lemma}[Theorem]{Lemma}
\newtheorem{Proposition}[Theorem]{Proposition}
 { \theoremstyle{definition}

\newtheorem{Example}[Theorem]{Example}
\newtheorem{Remark}[Theorem]{Remark} }
\begin{document}
\allowdisplaybreaks

\newcommand{\arXivNumber}{2211.16706}

\renewcommand{\PaperNumber}{094}

\FirstPageHeading

\ShortArticleName{A $3 \times 3$ Lax Form for the $q$-Painlev\'e Equation of Type $E_6$}

\ArticleName{A $\boldsymbol{3 \times 3}$ Lax Form for the $\boldsymbol{q}$-Painlev\'e Equation\\ of Type $\boldsymbol{E_6}$}

\Author{Kanam PARK}

\AuthorNameForHeading{K.~Park}

\Address{National Institute of Technology, Toba College, 1-1, Ikegami-cho, Toba-shi, Mie, Japan}
\Email{\href{mailto:paku-k@toba-cmt.ac.jp}{paku-k@toba-cmt.ac.jp}}

\ArticleDates{Received December 01, 2022, in final form November 05, 2023; Published online November 18, 2023}

\Abstract{For the $q$-Painlev\'e equation with affine Weyl group symmetry of type $E_6^{(1)}$, a~$2\times 2$ matrix Lax form and a second order scalar lax form were known. We give a new~$3\times 3$ matrix Lax form and a third order scalar equation related to it. Continuous limit is also discussed.}

\Keywords{Lax formalism; $q$-Painlev\'e equation}

\Classification{14H70; 34M56; 39A13}

\section{Introduction}
The $q$-Painlev\'e equation with affine Weyl group symmetry of type $E_6^{(1)}$ was first discovered in~\cite{rgh91}.
 The well-known form of it is as follows
\begin{gather*}
T\colon\ (a_1,a_2,a_3,a_4,a_5,a_6,a_7,a_8;f,g)\to
 \big(a_1/q,a_2/q,a_3/q,a_4/q,a_5,a_6,a_7,a_8;\overline{f},\overline{g}\big), \\
\frac{\big(\overline{f}g-1\big)(fg-1)}{f \overline{f}}=\frac{(g-1/a_5)(g-1/a_6)( g-1/a_7)(g-1/a_8)}{(g-a_3)(g-a_4)}, \\
\frac{\big(\overline{f}g-1\big)\big(\overline{f}\overline{g}-1\big)}{g\overline{g} }=\frac{\big(\overline{f}-a_5\big)\big(\overline{f}-a_6\big)\big(\overline{f}-a_7\big)\big(\overline{f}-a_8\big)}{\big(\overline{f}-a_1/q\big)\big(\overline{f}-a_2/q\big)},\\
q=\frac{a_1a_2}{a_3a_4a_5a_6a_7a_8},
\end{gather*}
where $f$ and $g$ are dependent variables, $a_1, a_2, \dots, a_8$ are parameters, $q$ is a constant and the overline symbol ``$\ \bar{}\ $'' denotes the discrete time evolution.

In previous works, the following Lax forms for the $q$-Painlev\'e equation for type $E_6^{(1)}$ has been obtained.
In \cite{sakai06}, a $2 \times 2$ matrix Lax pair was first derived as a reduction of the $q$-Garnier system.
The other approach gives a $2 \times 2$ matrix Lax pair \cite{dk20, kni16}.
In \cite{ymd11}, a second order scalar Lax form was obtained as a reduction from the $q$-Painlev\'e equation of type $E_8$.
The relation between these Lax forms was given in \cite{wo12}.

In this article, we give a new Lax form with $3 \times 3$ matrix Lax pair.
We derive such a Lax form as a special case of the system investigated in our previous work \cite{park20}.
As a result, we derive an equation which is equivalent to the $q$-Painlev\'e equation of type $E_6^{(1)}$ \cite{rg00,rgtt01}.

In the previous work \cite{park20}, we defined a nonlinear $q$-difference system as a connection preserving deformation of the following linear equation
\begin{gather}
\Psi (qz)=\Psi(z)A(z), \qquad A(z)=DX_1^{\varepsilon_1}(z)X_2^{\varepsilon_2}(z)\cdots X_M^{\varepsilon_M}(z),\nonumber\\
D=\operatorname{diag}[d_1,d_2,\dots,d_N],\nonumber\\
X_i(z)=\operatorname{diag} [u_{1,i},u_{2,i},\dots, u_{N,i}]+\Lambda, \qquad
\Lambda=
\begin{bmatrix}
 0&1& &{\bf O}\\
 & \ddots&\ddots&\\
 & & &1\\
z& &&0
\end{bmatrix},\label{laxmn}
\end{gather}
where the exponents are $\varepsilon_i=\pm 1$ $(1 \leq i \leq M)$,
 $ u_{j,i}$ ($1 \leq j \leq N$) are dependent variables and~$c_i$,~$d_j $ are parameters which satisfy
\[
\prod_{j=1}^N u_{j,i}=c_i.
\]
Since one can exchange the order of matrices $X_i^{\pm 1}$ by suitable rational transformations of variables $u_{j,i}$, the equation \eqref{laxmn}
essentially depends on $M_+$, $M_-$, where $M_{\pm}=\# \{ \varepsilon_i | \varepsilon_i=\pm 1 \}$.

The contents of this paper is as follows.
In Section \ref{l33}, we set up a linear $q$-difference equation~\eqref{lax}, which is a case of $(M_+,M_-,N)=(3,0,3)$ for the equation~\eqref{laxmn}.
And we discuss about its two deformations.
One deformation gives rise to a
well known form of the $q$-Painlev\'e equation of type \smash{$E_6^{(1)}$},
and the other deformation gives an equation for a non-standard direction.
Namely, it does not give a $q$-shift deformation for parameters.
In Section \ref{scle6}, we derive a scalar equation from the $3 \times 3$ matrix equation \eqref{lax} and consider its characteristic properties.
In Section \ref{blc}, we study continuous limit of our constructions and its relation to the Boalch's Lax pair \cite{boalch09}.
In Appendix \ref{rvt12}, we give deformations considered in Section \ref{l33} on root variables.
In Appendix \ref{b}, we give explicit forms of coefficients of a single linear $q$-difference equation derived in Section \ref{scle6}.
\begin{Remark}
The equation \eqref{laxmn} in case $(M_+,M_-,N)=(2n+2,0,2)$ is known that it is equivalent to the linear $q$-difference equation related to the $2n$-dimensional $q$-Garnier system~\cite{sakai05}.
And in case of $(M_+,M_-,N)=(2,0,2n+2)$, the equation \eqref{laxmn}
is also equivalent to the linear $q$-difference equation related to the $2n$-dimensional system $q$-$P_{(n+1,n+1)}$ \cite{suzuki15}.
In both cases, when $n=1$, they give rise to the $q$-$P_{\rm VI}$ equation \cite{js96}.
We note that the equation \eqref{lax} is coincide to the $q$-difference linear equation of the Lax form in \cite{szk21} in case of ($m$, $n$)=($3$, $1$).
 \end{Remark}

\section[A 3 times 3 matrix Lax for]{A $\boldsymbol{3 \times 3}$ matrix Lax form}\label{l33}

In this section, we consider two types of deformations for
the linear $q$-difference equation \eqref{laxmn} in a case $(M_+,M_-,N)=(3,0,3)$.

We consider the connection preserving deformation for the following $q$-difference equation for an unknown function $\Psi(z)=[\Psi_1(z),\Psi_2(z),\Psi_3(z)]$:
\begin{gather}
\Psi (qz)=\Psi(z)A(z)=\Psi(z)A(z,t), \nonumber\\
A(z)=DX_1(z)X_2(z)X_3(z)
=\begin{bmatrix}
b_1d_1&*&*\\
0&b_2d_2&*\\
0&0&b_3d_3
\end{bmatrix}
+\begin{bmatrix}
d_1&0&0\\
*&d_2&0\\
*&*&d_3
\end{bmatrix}z,\label{lax}
\end{gather}
where the matrices $D$ and $X_i(z)$ $(1\leq i \leq 3)$ stand for
\begin{gather*}
D=\operatorname{diag}[d_1,d_2,d_3],\qquad
X_i(z)=\operatorname{diag} [u_{1,i},u_{2,i},u_{3,i}]+\Lambda, \qquad
\Lambda=
\begin{bmatrix}
 0&1&0 \\
 0& 0& 1\\
z& 0&0
\end{bmatrix},
\end{gather*}
and $ u_{j,i} $ ($1 \leq i$, $j \leq 3$) are dependent variables and $c_i$, $d_j $ are parameters which satisfy
\begin{gather}\label{para}
\displaystyle \prod_{i=1}^3 u_{j,i}=b_j,\quad 1\leq j \leq 3,\qquad
\displaystyle \prod_{j=1}^3 u_{j,i}=c_i,\quad 1\leq i \leq 3,\qquad
b_1b_2b_3=c_1c_2c_3.
\end{gather}
The first equation in \eqref{para} is equivalent to that the characteristic exponents at $z=0$ of the equation \eqref{lax} are $b_j d_j$.
The second equation in \eqref{para} is equivalent to the following condition:
\begin{equation}\label{det}
|A(z)|=d_1d_2d_3 (z+c_1)(z+c_2)(z+c_3).
\end{equation}
Through a gauge transformation by a $3\times 3$ diagonal matrix
we can take two components in \eqref{lax} as 1.
In the following, we use this kind of gauge fixings in case by case.
By the condition \eqref{det}, two of the remaining four components are determined by other components and parameters~$b_j$,~$c_i$,~$d_j$.

In this article, we will consider two deformations $T_1$ and $T_2$ for the equation \eqref{lax} which act
 on parameters $b_j$, $c_i$, $d_j$ as
 \begin{gather*}
 T_1\colon\ ({b_1},{b_2},{b_3},{c_1},{c_2},{c_3},
 {d_1},{d_2},{d_3}) \to \left(b_1,qb_2,\frac{b_3}{q},c_1,c_2,c_3,qd_1,d_2,qd_3\right), \\
 T_2\colon\ ({b_1},{b_2},{b_3},{c_1},{c_2},{c_3},
 {d_1},{d_2},{d_3}) \to \left(\frac{b_1 d_1}{d_2},\frac{b_2 d_2}{d_3},b_3,c_1,c_2,c_3 q,d_2,d_3,\frac{d_1}{q}\right).
 \end{gather*}
 \subsection[The deformation T\_1]{The deformation $\boldsymbol{T_1}$}\label{pd}
 We will show that the deformation $T_1$ gives rise to the standard form of the $q$-$E_6^{(1)}$.
 We consider the following $q$-difference linear equation:
 \begin{gather}\label{e6a}
 \Psi(qz)=\Psi(z) A(z), \qquad A(z) =\begin{bmatrix}
 b_1d_1 & 1 & v_1 \\
 0 & b_2d_2 & 1 \\
 0 & 0 & b_3d_3
\end{bmatrix}
+\begin{bmatrix}
 d_1 & 0 & 0 \\
 v_2 & d_2 & 0 \\
 v_3 & v_4 & d_3
 \end{bmatrix}z, \\
 |A(z)|=d_1d_2d_3(z+c_1)(z+c_2)(z+c_3).\label{dett}
 \end{gather}
Although there are many ways of gauge fixings,
fixing as the equation \eqref{e6a} makes relatively easier to find a new change of variables from indeterminate points of time evolution equations $\overline{v_i}$ ($\overline{*}=T_1(*)$).

 As a connection preserving deformation for the equation \eqref{e6a}, \eqref{dett} we take the following deformation for parameters
 \begin{equation}\label{deqq}
 T_1\colon\ (b_1,{b_2},{b_3},c_1,c_2,c_3,
 {d_1},d_2,{d_3})
 \to \left( b_1,qb_2,\frac{b_3}{q},c_1,c_2,c_3,qd_1,d_2,qd_3\right).
 \end{equation}
Then, there is a matrix $B(z)$ which satisfies the following deformation equation:
\[
T_1 \Psi(z)=\Psi(z)B(z).
\]
We derive and show the matrix $B(z)$.
First, the matrix $B(z)$ is a rational function for $z$ \cite{js96}.
In fact,
by the argument \cite{js96} to determine a coefficient matrix of a deformation equation,
the matrix~$B(z)$ is of degree one through the following:
\begin{itemize}\itemsep=0pt
\item[(i)] The parameters $c_i$ which satisfy $|A(c_i)|=0$ are constant by the deformation $T_1$ \eqref{deqq}.
Therefore, the matrix $B(z)$ does not have poles at $z=-q^k c_i$ ($k \in \mathbb{Z}$).
Namely, the matrix~$B(z)$ has poles only at $z=0$ or $z=\infty$.
\item [(ii)] The deformation $T_1$ \eqref{deqq} shifts characteristic exponents of the equation \eqref{e6a} at $z=0$ as follows:
\[
T_1\colon\ (b_1d_1, b_2d_2, b_3d_3) \to (qb_1d_1, qb_2d_2, b_3d_3).
\]
Therefore, the matrix $B(z)$ behaves nearby $z=0$ as
\[
B(z)=B_0+B_1z+\mathcal{O}\big(z^2\big).
\]
\item [(iii)] The deformation $T_1$ \eqref{deqq} shifts characteristic exponents of the equation \eqref{e6a} at $z=\infty$ as follows:
\[
T_1\colon\ (d_1, d_2, d_3) \to (qd_1, d_2, qd_3).
\]
Therefore, the matrix $B(z)$ behaves nearby $z=\infty$ as
\[
B(z)=B_1z+B_0+\mathcal{O}\bigg(\frac{1}{z}\bigg).
\]
 \end{itemize}
From the above (i)--(iii), the matrix $B(z)$ is a polynomial in $z$ of degree one.
At last, we derive an explicit form of the matrix $B(z)$.
We express the matrix $B(z)$ as follows:
\begin{equation}\label{bz}
B(z)=B_0+B_1z,\qquad B_i=\big[\beta_{j,k}^i\big]_{1\leq j, k \leq 3},\quad i=1, 2.
\end{equation}
Comparing coefficients of $z$ for a
compatibility condition equation
\[
B(z)\overline{A(z)}=A(z)B(qz), \qquad \overline{*}=T_1(*),
\]
 for the matrices $A(z)$ \eqref{e6a} and $B(z)$ \eqref{bz}, we have the following three equations:
\begin{equation}\label{abkeisu}
B_0 \overline{A_0}=A_0 B_0,\qquad B_1\overline{A_0}+B_0\overline{A_1}=qA_0 B_1+A_1B_0, \qquad B_1\overline{A_1}=qA_1B_1,
\end{equation}
where matrices $A_i$ denote coefficients of $z^i$ for the matrix $A(z)$ \eqref{e6a}.
Solving the first and the third equations of \eqref{abkeisu}, forms of the matrices $B_0$ and $B_1$ are as follows:
 	\begin{gather*}
B_0=
	\begin{bmatrix}
	 0 & 0 & \frac{\beta_{2,3}^{0}\left(b_2 d_2 v_1-b_3 d_3
 v_1-1\right)}{b_1 d_1-b_3
 d_3} \\
 0 & 0 &\beta_{2,3}^{0} \\
 0 & 0 &
 -\beta_{2,3}^{0}\left(b_2 d_2-b_3
 d_3\right)
	 \end{bmatrix}, \\
B_1=\begin{bmatrix}
\frac{\left(d_1-d_2\right)
 \beta_{2,1}^{1} }{v_2} & 0 & 0
 \\
 \beta_{2,1}^1 & 0 & 0 \\
 \frac{d_1 q v_3 \overline{v_4}
 \beta_{2,1}^1 +d_2 \left(v_2
 \overline{v_3} \beta_{3,2} ^1-q
 v_3 \overline{v_4}
 \beta_{2,1}^1 \right)+v_2
 \left(q v_4 \overline{v_4}
 \beta_{2,1}^1 -\beta_{3,2}^1 \left(d_3 q
 \overline{v_3} +\overline{v_2}
 \overline{v_4} \right)\right)}{\left(d_1-d_3\right) q v_2
 \overline{v_4} } &
 \beta_{3,2}^1 &
 -\frac{\beta_{3,2} ^1
 \left(d_2-d_3
 q\right)}{\overline{v_4} }
 \end{bmatrix}.
\end{gather*}
From the second equation of \eqref{abkeisu},
we obtain explicit forms of time evolutions $\overline{v_i}$ ($1\leq i \leq 4$), the components $\beta_{2,3}^0$ and $\beta_{2,1}^1$ of the matrix $B(z)$ \eqref{bz}.
Explicit forms of the remain components~$\beta_{2,3}^0$ and $\beta_{2,1}^1$ are as follows:
\begin{gather*}
\beta_{2,3}^0= \beta_{3,2}^1 q
 (b_1 d_1-b_3 d_3)
 (d_1-d_2-v_1
 v_2)\bigl(-b_2^2 d_2^2 q v_1 v_2 (d_3 q-d_1+v_1v_2)-b_3 d_3 d_2 q v_2 \\
\hphantom{\beta_{2,3}^0=}{}
-2 b_3^2 d_3^2 d_2 q v_1
 v_2+b_3 d_3 d_2 q v_1 v_3+b_2 d_2
 \big(b_3 d_3
 (d_1 q (-d_3 (q-1)-2 v_1v_2)\\
\hphantom{\beta_{2,3}^0=}{}
 +q v_1 v_2 (d_3 q+2 v_1v_2)+d_2 (d_3(q-1) q+(2 q-1) v_1v_2))
 +q \big(d_3
 q v_2+d_1 (v_1
 v_3-v_2)\\
\hphantom{\beta_{2,3}^0=}{}
+v_1 \bigl(-d_2
 v_3+v_2^2-v_1 v_3
 v_2\bigr)\big)\big)+b_3
 d_3 d_2 q v_4-b_3^2 d_3^2 q
 v_1^2 v_2^2-b_3 d_3 q v_1
 v_2^2-b_3 d_3^2 q v_2\\
\hphantom{\beta_{2,3}^0=}{}
 +b_3 d_1 d_3 q v_2
 +b_3^2 d_1 d_3^2 qv_1 v_2-b_3 d_1 d_3 q v_1v_3
 +b_3 d_3 q v_1^2 v_2 v_3-b_3 d_1 d_3 q v_4\\
\hphantom{\beta_{2,3}^0=}{}
 +b_3 d_3q v_1 v_2 v_4+b_1 d_1
 \big(b_3 d_3 \big(d_2 q v_1v_2-d_3 q v_1 v_2+d_2^2
 (q-1)
 -d_1 d_2 (q-1)\big)\\
\hphantom{\beta_{2,3}^0=}{}
 +b_2 d_2 q (d_2 (d_3 (-q)+d_3-v_1 v_2)+d_1 d_3 (q-1)
 +d_3 v_1 v_2)
 +q v_4 (d_1-d_2-v_1 v_2)\big)
\\
\hphantom{\beta_{2,3}^0=}{}
 -b_3^2 d_3^2 d_2^2 q+b_3^2 d_1 d_3^2 d_2 q
 +b_3 d_3 d_2 v_2+b_3^2 d_3^2 d_2 v_1 v_2
 +b_3^2 d_3^2 d_2^2\\
\hphantom{\beta_{2,3}^0=}{}
 -b_3^2 d_1 d_3^2 d_2
 +d_2 q v_3-d_1 q v_3+q v_1 v_2 v_3\bigr)^{-1},\\
 \beta_{2,1}^1= \beta_{3,2}^1\frac{q v_2
 (b_1 d_1 v_1-b_2 d_2
 v_1+1)}{v_2 (1-b_2
 d_2 v_1)+b_3 d_3
 (-d_1+d_2+v_1
 v_2)+b_1 d_1
 (d_1-d_2)}.
\end{gather*}
We remark that time evolutions $\overline{v_i}$ are independent for the component $\beta_{3,2}^1$.
Therefore, we set~$\beta_{3,2}^1=1$.

From the above, a deformation equation $T_1\Psi(z)=\Psi(z)B(z)$ is expressed as follows:
 \begin{gather}
 T_1\Psi(z)=\Psi(z)B(z),\qquad
 B(z)=
 \begin{bmatrix}
 0&0&w_1\\
 0&0&w_2\\
 0&0&w_3
 \end{bmatrix}
 +\begin{bmatrix}
 w_4&0&0\\
 w_5&0&0\\
 w_6&1&w_7
 \end{bmatrix}z, \label{t1bz}\\
 |B(z)|=(w_1w_5-w_2w_4)z^2,\nonumber\\
 T_1\colon\ ({b_1},{b_2},{b_3},{c_1},{c_2},{c_3},
 {d_1},{d_2},{d_3};v_1,v_4)\nonumber \\
 \phantom{ T_1\colon}{}\qquad\to \left(b_1,qb_2,\frac{b_3}{q},c_1,c_2,c_3,qd_1,d_2,qd_3;\overline{v_1},\overline{v_4}\right).\label{deq1}
 \end{gather}
 \begin{Theorem}
 Through a compatibility condition of the equations \eqref{e6a}, \eqref{t1bz}, \eqref{deq1},
 \begin{equation}\label{ryrt1}
 B(z) \overline{A(z)}=A(z)B(qz),
 \end{equation}
 we obtain the following equations:
\begin{gather}
\frac{(fg-1)\big(\overline{f}g-1\big)}{f \overline{f} }=\frac{(g-1/c_1)(g-1/c_2)(g-1/c_3)(g-d_2/b_3d_3)}{(g-1/b_2)(g-d_2/b_1d_1)},\nonumber \\
\frac{\big(\overline{f}g-1\big)\big(\overline{f}\overline{g}-1\big)}{g\overline{g}}
=\frac{\big(\overline{f}-c_1\big)\big(\overline{f}-c_2\big)\big(\overline{f}-c_3\big)\big(\overline{f}-b_3d_3/d_2\big)}{\big(\overline{f}-b_3d_3/d_1\big)\big(\overline{f}-b_3/q\big)},\label{e6pade}
\end{gather}
 where
 \begin{gather}\label{e6fg}
 f=\frac{b_3 d_3}{d_3-v_1 v_4},\qquad
 g= \frac{d_2 v_1}{b_2 d_2
 v_1-1},
 \end{gather}
 and $\overline{*}$ stands for $T_1(*)$.
 \end{Theorem}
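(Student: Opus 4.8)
The plan is to convert the matrix compatibility condition \eqref{ryrt1} into an explicit birational map on the two essential dynamical variables, and then to read off \eqref{e6pade} after the change of coordinates \eqref{e6fg}. Expanding \eqref{ryrt1} in powers of $z$ yields the three matrix relations \eqref{abkeisu}; the first and third only fix the shapes of $B_0$ and $B_1$ and were already solved above, so the genuine evolution is carried by the middle relation $B_1\overline{A_0}+B_0\overline{A_1}=qA_0B_1+A_1B_0$. First I would use the determinant normalization \eqref{dett}: expanding $|A(z)|$ for the gauge \eqref{e6a} and matching the coefficients of $z^2$ and $z^1$ gives two linear equations determining $v_2$ and $v_3$ in terms of $v_1$, $v_4$ and the parameters (the $z^0$ and $z^3$ coefficients match automatically using $b_1b_2b_3=c_1c_2c_3$). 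This leaves $v_1$ and $v_4$ as the only free variables, in agreement with the fact that \eqref{deq1} evolves exactly this pair.

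Next I would read off from the middle relation of \eqref{abkeisu} the images $\overline{v_1}$ and $\overline{v_4}$ as rational functions of $v_1$, $v_4$ and the parameters; the auxiliary components $\beta_{2,3}^0$ and $\beta_{2,1}^1$ recorded above are by-products of this step, with $\beta_{3,2}^1$ normalized to $1$. I would then pass to the coordinates \eqref{e6fg}. Inverting them gives $v_1=g/\bigl(d_2(b_2g-1)\bigr)$ and $v_1v_4=d_3(f-b_3)/f$, so the substitution is birational. Using the parameter shifts in \eqref{deqq}, namely $\overline{b_2}=qb_2$, $\overline{b_3}=b_3/q$, $\overline{d_1}=qd_1$, $\overline{d_3}=qd_3$ with the remaining parameters fixed, together with $\overline{v_1}$ and $\overline{v_4}$, I would compute $\overline{f}=\overline{b_3}\,\overline{d_3}/\bigl(\overline{d_3}-\overline{v_1}\,\overline{v_4}\bigr)$ and $\overline{g}=\overline{d_2}\,\overline{v_1}/\bigl(\overline{b_2}\,\overline{d_2}\,\overline{v_1}-1\bigr)$ as rational functions of $f$ and $g$.

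Finally I would verify the two equations of \eqref{e6pade} one at a time. Each determines a single evolved variable by a Möbius relation: the first fixes $\overline{f}$ in terms of the pair $(f,g)$ --- it enters only through the factor $(\overline{f}g-1)/\overline{f}$ --- and should emerge as the ``$f$-update'', while the second fixes $\overline{g}$ in terms of $(\overline{f},g)$ as the ``$g$-update''. For each I would clear denominators and confirm the resulting polynomial identity, the decisive point being that the right-hand sides collapse to the stated factored forms. Their zeros and poles correspond to distinguished loci of the Lax data: for instance $g=1/b_2$ is the locus $v_1=\infty$, and $g=d_2/b_1d_1$ is exactly where the numerator $b_1d_1v_1-b_2d_2v_1+1$ of $\beta_{2,1}^1$ vanishes, whereas the factors involving the $c_i$ reflect the determinant condition \eqref{dett}. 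Matching these special loci is what pins down the factorizations.

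The obstacle I anticipate is essentially the size of the algebra: the solved evolution $\overline{v_1}$, $\overline{v_4}$ and the component $\beta_{2,3}^0$ are very large rational expressions, so the verification of \eqref{e6pade} amounts to a substantial polynomial identity that is most safely confirmed with the aid of computer algebra. The conceptual work, and the reason for choosing the gauge \eqref{e6a} and the coordinates \eqref{e6fg}, is to make the indeterminate points of the evolution --- and hence the factored right-hand sides --- transparent; the remaining effort is to organize the simplification so that the numerators and denominators visibly factor into the linear pieces appearing in \eqref{e6pade}.
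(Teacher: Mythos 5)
Your proposal is correct and takes essentially the same route as the paper: expand the compatibility condition \eqref{ryrt1} in powers of $z$, solve for $\overline{v_1}$, $\overline{v_4}$ as rational functions of $(v_1,v_4)$, and transfer this map to the coordinates \eqref{e6fg} to obtain \eqref{e6pade}. The only deviations are tactical: you eliminate $v_2$, $v_3$ via \eqref{dett} \emph{before} solving the compatibility condition, whereas the paper remarks that doing this elimination \emph{after} the four-variable computation is more efficient, and you verify the stated coordinates by matching singular loci, whereas the paper uses the indeterminate points of the map \eqref{v14} (in particular the zeros of $D_3$) to discover the substitution \eqref{v2f} in the first place.
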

 The equation \eqref{e6pade} is the well known form of the $q$-Painlev\'e equation of type $E_6$ \cite{rg00,rgtt01} (see also \cite{kny17}).
 \begin{proof}The result is obtained by a direct computation of the compatibility condition of \eqref{ryrt1}.
Since the computation is rather heavy, we will give a comment how to do it efficiently.
Though the $2$ variables~$v_2$,~$v_3$ can be represented by the rational functions of the remaining two variables $v_1$, $v_4$ by the relation \eqref{dett},
 it is more efficient to do this elimination after the calculation of the compatibility condition \eqref{ryrt1} in $4$ variables, and then reduce it to 2 variables.
In this way,
we get the following time evolutions for $v_1$, $v_4$ as rational functions of $v_1$, $v_4$:
 \begin{equation}\label{v14}
 \overline{v_1}=\frac{C_1(v_1,v_4)C_2(v_1,v_4)}{D_1(v_1,v_4)D_2(v_1,v_4)},\qquad \overline{v_4}=\frac{C_3(v_1,v_4)}{D_3(v_1,v_4)D_4(v_1,v_4)},
 \end{equation}
 where
 \begin{gather*}
 C_3(v_1,v_4), D_2(v_1,v_4)\colon\ \text{polynomials in }v_1, \ v_4 \text{ of degree } (4, 3),\\
 C_2(v_1, v_4)\colon\ \text{a polynomial in } v_1, \ v_4 \text{ of degree } (3, 1),\\
 C_1(v_1, v_4), D_3(v_1,v_4), D_4(v_1,v_4)\colon\ \text{polynomials in }v_1, \ v_4 \text{ of degree } (2, 1),\\
D_1(v_1,v_4)\colon\ \text{a polynomial in }v_1, \ v_4 \text{ of degree } (1, 0).
 \end{gather*}
The remaining task is to rewrite the equation \eqref{v14} to \eqref{e6pade}.
A useful way to solve it is to look at the singularities of the equations~\cite{antake18}.
Namely, we investigate the points at which the right hand side of the equations \eqref{v14} are indeterminate.
 We focus on the
 equation $D_3(v_1, v_4)$ of them
 \begin{equation}\label{d3}
 D_3(v_1,v_4)= d_3 (-b_2 d_2 v_1+b_3 d_2 v_1+1 )+v_1 v_4 (b_2 d_2 v_1-1 ).
 \end{equation}
 Investigating common zero points of the equation \eqref{d3} and the other polynomials $C_k(v_1, v_4)$, $D_l(v_1, v_4)$, we find 4 indeterminate points as follows:
 \begin{gather}
 (v_1,v_4)=\left(-\frac{1}{(u^{-1}-b_2)d_2},-\frac{\big(u^{-1}-b_3\big)\big(u^{-1}-b_2\big)d_3d_2}{u^{-1}}\right),\nonumber\\
 u=c_1^{-1},c_2^{-1},c_3^{-1},d_2/b_3d_3.\label{v2fg}
 \end{gather}
 The other $4$ points are the following:
 \begin{gather*}
 (v_1,v_1v_4)=(\infty,\infty),
 \qquad \left(\frac{1}{b_2d_2-b_1d_1},\infty\right),\qquad
 \left(\frac{1}{b_2d_2},d_3-d_1\right),\qquad
 \left(\frac{1}{b_2d_2},0\right).
 \end{gather*}
In view of the form of the points in \eqref{v2fg},
 we define the variables
 $f$, $g$ as follows:
 \begin{gather}\label{v2f}
 (v_1,v_1v_4)=\left( -\frac{1}{\big(g^{-1}-b_2\big)d_2}, \frac{(f-b_3)d_3}{f}\right).
 \end{gather}
 By the transformation \eqref{v2f}, the equation $D_3(v_1,v_4)=0$ is transformed to an equation~${fg=1}$.
Through the correspondence \eqref{v2f} and the time evolution equations for the variables $v_1$,~$v_4$ \eqref{v14}, we have time evolution equations for $f$, $g$ \eqref{e6pade}.
 \end{proof}

\subsection[The deformation T\_2]{The deformation $\boldsymbol{T_2}$ }\label{33}
In this subsection, we take a deformation equation which is one of
 that considered in the previous work \cite{park20}.
It corresponded to the permutations of the matrices $X_i(z)^{\pm 1}$.
 We consider the following Lax pair:
 \begin{gather}
\Psi (qz)=\Psi(z)A(z),\nonumber \\
 A(z)=DX_1(z)X_2(z)X_3(z)=\begin{bmatrix}
 b_1d_1 & 1 & * \\
 0 & b_2d_2 & 1 \\
 0 & 0 & b_3d_3
\end{bmatrix}
+\begin{bmatrix}
 d_1 & 0 & 0 \\
 * & d_2 & 0 \\
 * & * & d_3
 \end{bmatrix}z,\label{mn33}\\
 T_2{\Psi}(z)=\Psi(z)B(z),\qquad B(z)=X_3(z/q)^{-1},\qquad
 |B(z)|=\frac{q}{z+qc_3},\nonumber \\
 T_2\colon\ ({b_1},{b_2},{b_3},{c_1},{c_2},{c_3},
 {d_1},{d_2},{d_3};x,y)\nonumber \\
 \phantom{T_2\colon}{} \qquad \to \left(\frac{b_1 d_1}{d_2},\frac{b_2 d_2}{d_3},b_3,c_1,c_2,c_3 q,d_2,d_3,\frac{d_1}{q};\overline{x},\overline{y}\right),\label{deq2}
 \end{gather}
 where we define variables $x$, $y$ and gauge freedom $w_1$, $w_2$ with the variables $u_{j,i}$ in \eqref{mn33}
 as follows:
 \begin{equation}\label{swk}
 x=\frac{u_{1,1} u_{1,2} u_{2,1}
 u_{2,2}}{u_{1,1}+u_{2,2}},\qquad y=\frac{1}{u_{1,1} u_{1,2}
 \left(u_{2,1}+u_{3,2}\right)},\qquad w_1=u_{1,1},\qquad w_2=u_{1,3}.
 \end{equation}
 \begin{Theorem} Through a compatibility condition of the equations \eqref{mn33}, \eqref{deq2}, \eqref{swk},
 \begin{equation}\label{ryrt}
 B(z) \overline{A(z)}=A(z)B(qz),
 \end{equation}
 we obtain the following equations:
 \begin{equation}\label{fup33}
\frac{ \overline{x}+c_1}{\overline{x}+c_2}=\frac{E_1E_2}{F_1F_2},\qquad
\frac{c_1\overline{y}+1}{c_2\overline{y}+1}=\frac{E_1E_3}{F_1F_3},
 \end{equation}
 where
 \begin{gather}
 E_1=b_1 d_1 (b_2 d_2 (1-x
 y)+c_1 d_3 y
 (c_2+x ) )+c_1
 d_2 d_3 x (c_2
 y+1 ),\nonumber\\
 E_2=b_1 d_1 (b_2 d_2 (1-x
 y)+c_3 d_3 q y
 (c_1+x ) )+c_3
 d_2 d_3 q x (c_1
 y+1 ),\nonumber\\
 E_3=b_1 d_1 (b_2 d_2 (1-x y)+c_2
 d_3 y
 (c_1+x ) )+c_3 d_2
 d_3 q x (c_1 y+1 ), \nonumber\\
 F_1=b_1 d_1 (b_2 d_2 (1-x
 y)+c_2 d_3 y
 (c_1+x ) )+c_2
 d_2 d_3 x (c_1
 y+1 ),\nonumber\\
 F_2=b_1 d_1 (b_2 d_2 (1-x y)+c_3
 d_3 q y
 (c_2+x ) )+c_3 d_2
 d_3 q x (c_2 y+1 ),\nonumber\\
 F_3= b_1 d_1 (b_2 d_2 (1-x y)+c_1
 d_3 y
 (c_2+x ) )+c_3 d_2
 d_3 q x (c_2 y+1 ),\label{fup333}
 \end{gather}
 and $\overline{*}$ stands for $T_2(*)$.
 \end{Theorem}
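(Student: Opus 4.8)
The plan is to take advantage of the fact that, in contrast to the previous theorem, the deformation matrix is prescribed explicitly as $B(z)=X_3(z/q)^{-1}$, so that no degree analysis is required to determine it. Since $B(qz)=X_3(z)^{-1}$, substituting into the compatibility condition \eqref{ryrt} and using the factorization $A(z)=DX_1(z)X_2(z)X_3(z)$, the relation $B(z)\overline{A(z)}=A(z)B(qz)$ collapses to
\[
\overline{A(z)}=X_3(z/q)\,A(z)\,X_3(z)^{-1}=X_3(z/q)\,D\,X_1(z)\,X_2(z).
\]
This exhibits $T_2$ as the transformation that removes the factor $X_3$ from the right end of the product and reinserts it, with its argument shifted by $z\mapsto z/q$, at the left end; this is precisely the permutation of the matrices $X_i(z)^{\pm1}$ referred to before \eqref{mn33}, and by the general construction of \cite{park20} the right-hand side is again a matrix polynomial of degree one in $z$ of the standard factorized shape.

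First I would compute $A(z)=DX_1(z)X_2(z)X_3(z)$ entrywise as a degree-one matrix polynomial and form the product $X_3(z/q)DX_1(z)X_2(z)$ explicitly. I would then equate this with the standard factorized form $\overline{A(z)}=\overline{D}\,\overline{X_1}(z)\overline{X_2}(z)\overline{X_3}(z)$ dictated by the parameter shift \eqref{deq2}, in which $\overline{D}=\operatorname{diag}[d_2,d_3,d_1/q]$ and each $\overline{X_i}(z)=\operatorname{diag}[\overline{u_{1,i}},\overline{u_{2,i}},\overline{u_{3,i}}]+\Lambda$. Comparing the coefficients of $z^0$ and $z^1$ entry by entry produces a system of algebraic relations that, together with the constraints \eqref{para}, determines the time evolutions $\overline{u_{j,i}}$ of the nine dependent variables.

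The last step is to pass to the two essential variables: I would substitute the evolved $\overline{u_{j,i}}$ into the definitions \eqref{swk} of $x$ and $y$, use the gauge freedom carried by $w_1=u_{1,1}$ and $w_2=u_{1,3}$ to eliminate the diagonal gauge redundancy, and simplify with the help of \eqref{para}. After clearing denominators, the resulting rational expressions for $\overline{x}$ and $\overline{y}$ should organize into the quotients $E_1E_2/(F_1F_2)$ and $E_1E_3/(F_1F_3)$ of \eqref{fup333}. I expect the main obstacle to lie not in the entrywise matrix identification, which is essentially bookkeeping, but in this final reduction: showing that the dependence on the redundant variables $u_{j,i}$ cancels and that $\overline{x}$ and $\overline{y}$ become rational functions of $x$ and $y$ alone, and then recognizing the bilinear combinations $E_i$, $F_i$. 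As in the previous proof, it is likely most efficient to carry out the compatibility computation first in all of the $u$-variables and only afterwards impose \eqref{para} and the substitution \eqref{swk}, so that the spurious common factors cancel cleanly at the end.
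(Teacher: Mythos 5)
Your proposal is sound, and its computational core is the same as the paper's: both amount to solving the compatibility condition \eqref{ryrt} for the updated variables and then rewriting the result in the coordinates $(x,y)$. The packaging differs in two ways. First, you exploit the fact that $B(z)=X_3(z/q)^{-1}$ is prescribed, so that \eqref{ryrt} collapses to the refactorization $\overline{A(z)}=X_3(z/q)\,D\,X_1(z)X_2(z)=\overline{D}\,\overline{X_1}(z)\,\overline{X_2}(z)\,\overline{X_3}(z)$; this is precisely the ``permutation of the factors $X_i$'' derivation that the paper itself mentions only in the Remark following the theorem (citing \cite{park20}), whereas the proof in the paper just solves \eqref{ryrt} with \eqref{swk} directly. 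Second, for the finishing step the paper does something you do not: it records the degrees of the rational maps $\overline{x}(x,y)$, $\overline{y}(x,y)$, locates their base points by the geometric method of \cite{kny17} (the eight points \eqref{xypoint}), and uses that configuration to produce the combinations $(\overline{x}+c_1)/(\overline{x}+c_2)$ and $(c_1\overline{y}+1)/(c_2\overline{y}+1)$ in the factorized form \eqref{fup333}. Since the statement already supplies $E_i$, $F_i$, your direct verification is logically sufficient --- note only that it is these cross-ratio-type combinations, not $\overline{x}$ and $\overline{y}$ themselves, that equal $E_1E_2/(F_1F_2)$ and $E_1E_3/(F_1F_3)$; the base-point analysis is what one needs in order to \emph{discover} these combinations rather than merely check them.

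One concrete warning about your second step. If you literally impose the standard form ``dictated by the parameter shift \eqref{deq2}'' together with the constraints \eqref{para}, i.e.\ the row products $\prod_i\overline{u_{j,i}}=\overline{b_j}$ with $\overline{b_3}=b_3$, the entrywise system has \emph{no} solution: the constant term of $X_3(z/q)\,D\,X_1(z)X_2(z)$ is upper triangular with diagonal $(b_1d_1,\,b_2d_2,\,b_3d_3)$, so the matching forces $\overline{b_j}\,\overline{d_j}=b_jd_j$ for each $j$, and with $\overline{d_3}=d_1/q$ this gives $\overline{b_3}=q\,b_3d_3/d_1$, not $b_3$. Equivalently, \eqref{deq2} as printed does not preserve the relation $b_1b_2b_3=c_1c_2c_3$ of \eqref{para}: it sends $b_1b_2b_3\mapsto b_1b_2b_3\,d_1/d_3$ while $c_1c_2c_3\mapsto q\,c_1c_2c_3$. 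This appears to be a slip in the paper rather than in your method; the parts of \eqref{deq2} you actually need --- $\overline{D}=\operatorname{diag}[d_2,d_3,d_1/q]$ and $(\overline{c_1},\overline{c_2},\overline{c_3})=(c_1,c_2,qc_3)$ --- are consistent both with the determinant $|\overline{A(z)}|=\frac{d_1d_2d_3}{q}(z+c_1)(z+c_2)(z+qc_3)$ and with the leading coefficient of the product, and since neither $b_3$ nor the third-column variables enter \eqref{swk} or the polynomials $E_i$, $F_i$, the discrepancy cannot propagate into \eqref{fup33}. So let the matching itself determine the row products of $\overline{X_3}$ instead of imposing $\overline{b_3}=b_3$; with that adjustment your argument goes through.
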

\begin{proof}
Solving a compatibility condition \eqref{ryrt} with \eqref{swk} for the variables $\overline{x}$, $\overline{y}$, we obtain the following equations
\[
\overline{x}=\frac{(xy-1)G_1(x,y)}{H_1(x,y)},\qquad \overline{y}=\frac{xG_2(x,y)}{H_2(x,y)},
\]
where $G_k(x,y)$, $H_k(x,y)$ $(k=1,2)$ are polynomials in variables $x$, $y$. The polynomial $G_1(x,y)$ is of degree $(1,1)$, $G_2(x,y)$ is of degree $(1,2)$ and $H_k(x,y)$ are of degree $(2,2)$.
Using a method~\cite{kny17} for finding point configuration, a configuration of points for the equation \eqref{fup33}, \eqref{fup333} is as follows:
\begin{align}
(x,y)={}& \left( -c_1, -\frac{1}{c_1}\right), \quad \left(-c_2, -\frac{1}{c_2}\right),\quad
 \left(-\frac{b_1 d_1}{d_2}, -\frac{d_2}{b_1 d_1}\right),\quad
 \left(-\frac{b_1 b_2}{c_3}, -\frac{c_3}{b_1 b_2} \right),\nonumber \\
 &\ (-b_2, 0), \quad \left(-\frac{b_1 b_2 d_1}{c_3 d_3 q}, 0\right),\quad \left(0, -\frac{1}{b_1}\right), \quad \left(0, -\frac{b_2 d_2}{c_1 c_2 d_3}\right).\label{xypoint}
\end{align}
Calculating $\frac{\overline{x}+c_1}{\overline{x}+c_2}$ and $\frac{c_1\overline{y}+1}{c_2 \overline{y}+1}$, respectively, we have equations \eqref{fup33}, \eqref{fup333}.
\end{proof}

\begin{Remark}
The time evolution equations $\overline{u_{j,i}} $ are derived by solving the compatibility condition \eqref{ryrt}.
For another derivation using the transformations which correspond to permutations of the matrices $X_i^{\varepsilon_i}(z)$, see~\cite[Proposition~2.1]{park20}.
\end{Remark}
Before ending this subsection,
we show a relation between pairs of the variables $(f,g)$ in~\eqref{e6pade} and $(x,y)$ in~\eqref{fup33}.
 \begin{Proposition} \label{fgxy}
 The equations \eqref{e6a}, \eqref{e6fg} and \eqref{mn33}, \eqref{swk}
 are equivalent with each other if the variables $f$, $g$ and $x$, $y$ are related as
 \begin{gather}
 f=\frac{c_1 c_2 c_3
 (b_2+x) (x
 y-1)}{b_1 b_2 y
 (c_1+x)
 (c_2+x)+c_3 x
 (b_2 (1-x y)+c_2 x y+c_1
 y
 (c_2+x)+x)}, \nonumber\\
 g=-\frac{y (b_1 b_2+c_3 x)}{c_3 (b_2 (1-x y)+x)+b_1b_2 x y}.\label{crsp}
 \end{gather}
 \end{Proposition}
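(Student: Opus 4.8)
The plan is to exploit the fact that the two linear equations \eqref{e6a} and \eqref{mn33} are written with the \emph{same} gauge fixing of the coefficient matrix: in both, the constant part of $A(z)$ is upper triangular with the $(1,2)$ and $(2,3)$ entries normalized to $1$ and diagonal $b_jd_j$, while the $z$-linear part is lower triangular with diagonal $d_j$. The Proposition is therefore not about the deformations $T_1$, $T_2$ themselves but about a birational identification of two coordinate systems on one and the same two-parameter family of matrices $A(z)$ (the family cut out by the determinant condition \eqref{dett}). Consequently, to establish the equivalence it suffices to match the entries of the common matrix $A(z)$ in the two descriptions and then trace through the definitions \eqref{e6fg} and \eqref{swk}.

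First I would expand the product $A(z)=DX_1(z)X_2(z)X_3(z)$ explicitly and read off its six nontrivial off-diagonal entries as polynomials in the dependent variables $u_{j,i}$; using $\Lambda^3=zI$ one checks directly that $A(z)$ is linear in $z$ with the asserted triangular structure. Matching this with the form \eqref{e6a} then yields $v_1,v_2,v_3,v_4$ as explicit rational functions of the $u_{j,i}$. In particular $v_1$ (the $(1,3)$ constant entry) and $v_4$ (the $(3,2)$ entry of the $z$-part) are obtained, and these are the only combinations entering \eqref{e6fg}.

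Next I would substitute these expressions into the definitions $f=b_3d_3/(d_3-v_1v_4)$ and $g=d_2v_1/(b_2d_2v_1-1)$ from \eqref{e6fg}, and rewrite the result using the variables $x,y$ together with the gauge variables $w_1=u_{1,1}$, $w_2=u_{1,3}$ introduced in \eqref{swk}, eliminating the remaining $u_{j,i}$ by means of the constraints \eqref{para}. The key structural point to verify is that, after imposing \eqref{para}, the dependence on the gauge parameters $w_1,w_2$ cancels, so that $f$ and $g$ become genuine rational functions of $x$ and $y$ alone; carrying this out should reproduce exactly the formulas \eqref{crsp}. Equivalently, one may regard \eqref{crsp} as a candidate change of variables and verify it by back-substitution into the two parametrizations.

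I expect the main obstacle to be the elimination step. The nine variables $u_{j,i}$ are constrained by the five independent relations in \eqref{para} together with the two gauge choices, so solving for the $u_{j,i}$ in terms of $(x,y,w_1,w_2)$ and the parameters, and then confirming that $w_1,w_2$ drop out of $f$ and $g$, is algebraically heavy. The conceptual content is small, since both sides describe the same linear equation, but making the gauge independence manifest and simplifying to the compact form \eqref{crsp} requires careful bookkeeping of the product relations \eqref{para}.
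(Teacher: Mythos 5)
Your proposal is correct and takes essentially the same route as the paper: both treat the two Lax forms as parametrizations of one and the same family of coefficient matrices $A(z)$ and obtain \eqref{crsp} by matching entries and tracing through the definitions \eqref{e6fg} and \eqref{swk}. The only difference is the direction of the elimination---the paper solves for $x$, $y$ and the gauge freedom $w_1$, $w_2$ in terms of $f$, $g$, whereas you express $f$, $g$ through the $u_{j,i}$ and check that $w_1$, $w_2$ drop out---which is an organizational rather than conceptual distinction.
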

 \begin{proof}
 Comparing the coefficient matrix $A(z)$ of the equation \eqref{e6a}, \eqref{e6fg} to the coefficient matrix $A(z)$ of the equation \eqref{mn33}, \eqref{swk}, we can solve for the variables $x$, $y$ and gauge freedom $w_1$, $w_2$ in terms of variables $f$, $g$ and parameters $b_j$, $c_i$, $d_j$ ($1\leq i, j \leq 3$).
 Then we have the desired relation between pairs of variables $(f,g)$ and $(x,y)$ \eqref{crsp}.
 \end{proof}

In Appendix \ref{rvt12}, we give pictures of a point configuration of the equation \eqref{e6pade}
and the configuration \eqref{xypoint},
root basis associated with them,
and three deformations~$T_1$,~$T_2$ and $T_2^3$ on root variables attached with root basis.

 \section[A scalar equation related to the q-E\_6\^{}(1)]{A scalar equation related to the $\boldsymbol{q}$-$\boldsymbol{E_6^{(1)}}$}\label{scle6}
In this section, we derive a scalar $q$-difference equation from the matrix $q$-difference equation \eqref{mn33}, \eqref{swk} for an unknown function $\Psi(z)=[\Psi_1(z),\Psi_2(z),\Psi_3(z)]$ and its properties.

Before deriving, we state about a characterization of a linear $q$-difference equation.
Linear differential equations are characterized by its singular points and characteristic exponents.
Similarly, linear $q$-difference equations are also characterized by its singular points and characteristic exponents.
 We consider the following $n$-th order $q$-difference equation
\begin{gather}
P_n(z) \Phi(q^nz)+P_{n-1}(z) \Phi\big(q^{n-1}z\big)+\cdots +P_0(z) \Phi(z)=0, \nonumber\\
P_k(z)= p_{k,0}+p_{k,1}z+\cdots +p_{k,n+l-k}z^{n+l-k},\qquad 0\leq k \leq n,\quad l\in \mathbb{Z}_{\geq 0},\nonumber
\\
p_{0,0}, p_{0,n+l}, p_{n,0}, p_{n,l}\neq 0.\label{neq3}
\end{gather}
In the $q$-difference equation \eqref{neq3}, singular points are at $z=0$ and $z=\infty$.
And characteristic exponents of solutions at $z=0$ and $z=\infty$ are given as solutions of the following characteristic equations respectively
\begin{gather}
P_n(0)\lambda ^n +P_{n-1}(0) \lambda ^{n-1}+\cdots + P_0(0)=0, \nonumber\\
P_n(\infty)q^{n-1}{\mu} ^n +P_{n-1}(\infty)q^{n-2} {\mu} ^{n-1}+\cdots +P_1(\infty)\mu+ P_0(\infty)=0.\label{eq3.2}
\end{gather}
To characterize the $q$-difference equation \eqref{neq3} is namely to determine coefficients $p_{k,m}$.
The number of coefficients $p_{k,m}$ in the equation \eqref{neq3} is \smash{$\frac{(n+1)(n+2l+2)}{2}$}.
Through the following, total $3n+2l-1$ coefficients $p_{k,m}$ are determined:
\begin{itemize}\itemsep=0pt
\item[(i)] We express parameters $a_i$ and $b_j$ as zeroes of the coefficients $P_n(z)$ and $P_0(z)$:
\begin{gather}
P_n(z)=p_{n,l}(z-a_1)(z-a_2)\cdots (z-a_l),\nonumber \\
P_0(z)=p_{0,n+l}(z-b_1)(z-b_2)\cdots (z-b_{n+l}).\label{factor}
\end{gather}
The parameters $a_i$ and $b_j$ indicate poles of solutions of the equation \eqref{neq3}.
By the expressions \eqref{factor}, $(2l+n)$ coefficients $p_{n,m}$, $p_{0,m'}$ ($m \neq l$, $m' \neq n+l$) are determined.
\item[(ii)] We put parameters $c_i$ and $d_j$ as characteristic exponents at $z=0$ and $z=\infty$ respectively
\begin{gather}
z=0\colon\ c_1,c_2,\dots , c_n,\nonumber \\
z=\infty\colon\ d_1,d_2,\dots, d_n.\label{exponent}
\end{gather}
There is the following relation between parameters $a_i$, $b_j$, $c_k$ and $d_l$ ($q$-Fuchs' relation)
\begin{equation}\label{qfuchs}
(-1)^n q^{\frac{n(n-1)}{2}}
\prod _{j=1}^n d_j
\prod _{i=1}^{n+l} b_i
=\prod _{k=1}^l a_k
\prod _{j=1}^n c_j.
\end{equation}
By the conditions \eqref{exponent} and \eqref{qfuchs},
solving relations between roots and coefficients of characteristic equations~\eqref{eq3.2} at $z=0$ and $z=\infty$ for the equation \eqref{neq3}, $(2n-1)$ coefficients $p_{k,n+l-k}$ $p_{k',0}$ ($k \neq n$, $k' \neq 0$) are determined.
\end{itemize}
\begin{Example}
In case $(n,l)=(2,0)$, the equation \eqref{neq3} has $6$ coefficients $p_{k,m}$.
From the conditions (i) and (ii), the equation \eqref{neq3} is characterized uniquely up to normalization.
This equation is equivalent to the $q$-hypergeometric equation via a gauge transformation.
\end{Example}
From the above (i) and (ii), the number of remain coefficients $p_{k,m}$ is $\frac{1}{2}(n-1)(n+2l-2)$, which is the number of accessary parameters.
If $z=a_1$ is an apparent singularity for the equation~\eqref{neq3},
namely all solutions of the equation \eqref{neq3} are regular at $z=a_1$,
we have the following relations:
\begin{gather}
 P_0(a_1/q)=0,\qquad \text{and}\qquad
 f:=\frac{P_0(a_1)}{P_1(a_1/q)}=\frac{P_1(a_1)}{P_2(a_1/q)}=\cdots =\frac{P_{n-1}(a_1)}{P_n(a_1/q)},\label{apparent}
\end{gather}
where $f$ is a parameter.
The above equations \eqref{apparent} correspond to a non-logarithmic condition via a Laplace transformation $z \leftrightarrow T_z$.
The relations \eqref{apparent} determine $n$ coefficients $p_{k,m}$.

From now on, we derive a scalar $q$-difference equation from the matrix $q$-difference equation \eqref{mn33}, \eqref{swk} for an unknown function $\Psi(z)=[\Psi_1(z),\Psi_2(z),\Psi_3(z)]$ and its properties.
Eliminating functions $\Psi_2(z)$ and $\Psi_3(z)$ in the equation \eqref{mn33}, \eqref{swk}, we obtain the following third linear $q$-difference equation for $\Phi(z):=\Psi_1(z)$:
 \begin{equation}\label{scalar}
 L(z):=P_3(z)\Phi\big(q^3 z\big)+P_2(z) \Phi\big(q^2 z\big)+P_1(z)\Phi(q z)+P_0(z)\Phi(z)=0,
 \end{equation}
 where
 \begin{gather}
P_3(z)=p_{31}(z-u),\nonumber\\
P_2(z)=p_{22}z^2+p_{21}z+p_{20},\nonumber\\
P_1(z)=p_{13}z^3+p_{12}z^2+p_{11}z+p_{10},\nonumber\\
P_0(z)=-P_3(qz)d_1d_2d_3(z+c_1)(z+c_2)(z+c_3).\label{pjz}
\end{gather}
 Here, the coefficients $p_{k,l}$ $(1\leq k \leq 3,\, 0 \leq l\leq 3)$ in the polynomials $P_{k}(z)$ \eqref{pjz} depend on parameters $b_j$, $c_i$, $d_j$, and a variable $u$ defined as the zero of $P_3(z)$.
 The variable $u$ is expressed in terms of $x$, $y$ as follows
 \begin{equation}\label{udef}
 u=\frac{I_1(x,y)I_2(x,y)}{J_1(x,y)J_2(x,y)},
 \end{equation}
 where
 \begin{gather*}
 I_1(x,y)=y (b_1 (c_2 x
 y+c_1 y
 (c_2+x)+x)
 +c _1 c_2 (1-x y)),\\
 I_2(x,y)=
 b_2 d_3
 I_1(x,y)-b_2
 d_2
 J_1(x,y)
 +c_3 d_3 x (c_1
 y+1) (c_2
 y+1),\\
 J_1(x,y)=b_2
 (b_1 y+1) (x y-1),
 \\ J_2(x,y)=
 d_2 (
 J_1(x,y)
 -x(c_1 y+1)
 (c_2
 y+1))-d_3
 I_1(x,y).
 \end{gather*}
 Explicit forms of the polynomials $P_j(z)$ ($0\leq j \leq 3$) \eqref{pjz} are given in appendix.

Then we have
\begin{Lemma}
The equation $L(z)=0$ \eqref{scalar} has the following properties:
\begin{enumerate}\itemsep=0pt
\item[$(i)$] it is a linear four term equation between $\Phi\big(q^jz\big)$ $(0 \leq j\leq 3)$ and its coefficients $P_{j}(z)$ are polynomials for $z$ of degree $4-j$,
\item[$(ii)$] a polynomial $P_0(z)$ has four zero points at $z=-c_i$ $(1 \leq i \leq 3)$, $u/q$,
\item[$(iii)$] the exponents of solutions $\Phi(z)$ are $b_1d_1$, $qb_2 d_2$, $qb_3d_3$ $($at $z=0)$ and $d_1$, $d_2$, $d_3$ $($at~$z=\infty)$,
\item[$(iv)$] a point $z=u$ such that $P_3(z)=0$ is an apparent singularity,
namely we have
\begin{equation}\label{gdef}
v:=\frac{P_0(u)}{P_1(u/q)}=\frac{P_1(u)}{P_2(u/q)}=\frac{P_2(u)}{P_3(u/q)}.
\end{equation}
\end{enumerate}
Conversely, the equation $L(z)=0$ \eqref{scalar} is uniquely characterized by these properties $(i)$--$(iv)$ up to normalization.
\end{Lemma}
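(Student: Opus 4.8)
The plan is to split the statement into the forward direction --- that the scalar equation \eqref{scalar} produced by the elimination satisfies $(i)$--$(iv)$ --- and the converse uniqueness, which I would settle by the parameter count of the preliminary discussion specialised to $(n,l)=(3,1)$.

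For the forward direction I would read $(i)$ and $(ii)$ off the explicit shape \eqref{pjz}. Property $(i)$ asserts that eliminating $\Psi_2,\Psi_3$ from the degree-one matrix system \eqref{mn33} yields a four-term scalar equation in which the coefficient of $\Phi(q^{j}z)$ has degree $4-j$; this is the generic output of a Cramer-type elimination applied to a $3\times 3$ system whose entries are affine in $z$, and the precise coefficients are the ones recorded in the appendix. Property $(ii)$ is then immediate from the structural identity $P_0(z)=-P_3(qz)\det A(z)$ built into \eqref{pjz}: since $\det A(z)=d_1d_2d_3(z+c_1)(z+c_2)(z+c_3)$ by \eqref{det} and $P_3(z)=p_{31}(z-u)$, the polynomial $P_0$ vanishes exactly at $z=-c_i$ and at $z=u/q$.

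For $(iii)$ I would substitute the coefficients into the characteristic equations \eqref{eq3.2}. At $z=0$ the matrix $A(0)$ is upper triangular with diagonal $b_1d_1,b_2d_2,b_3d_3$, while the coefficient of $z$ in $A(z)$ is lower triangular with diagonal $d_1,d_2,d_3$; these diagonals govern the two exponent triples, the extra factors of $q$ at $z=0$ arising from the $q$-shifts that the elimination introduces. I would then confirm the resulting data --- exponents $(b_1d_1,qb_2d_2,qb_3d_3)$ at $0$ and $(d_1,d_2,d_3)$ at $\infty$, with the leading zero $u$ and the $P_0$-zeros $-c_1,-c_2,-c_3,u/q$ --- against the $q$-Fuchs relation \eqref{qfuchs}; this consistency check holds and fixes the $q$-powers.

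The delicate point, and what I expect to be the main obstacle, is $(iv)$: that the zero $z=u$ of the leading coefficient $P_3$ is only an apparent singularity, so that the three ratios in \eqref{gdef} coincide. The conceptual argument is that every solution of the scalar equation is the first component $\Phi=\Psi_1$ of a solution of the matrix system \eqref{mn33} (the reduction being invertible because $\Psi_1$ is a cyclic vector), and that system is holomorphic with $\det A\neq 0$ along the $q$-orbit of $u$, which is disjoint from the orbits of $0$, $\infty$ and the $-c_i$; hence all scalar solutions are regular at $u$ and no logarithmic term can appear. Since $P_0(u/q)=0$ already holds by $(ii)$, this regularity is precisely the non-logarithmic condition \eqref{apparent}, which is the content of \eqref{gdef}; alternatively one verifies the equalities directly from the appendix coefficients. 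For the converse I would invoke the count already set up: for $(n,l)=(3,1)$ there are $\frac{(n+1)(n+2l+2)}{2}=14$ coefficients, of which prescribing the zeros of $P_3,P_0$ fixes $2l+n=5$, the exponents together with \eqref{qfuchs} fix $2n-1=5$, and the single apparent singularity fixes the remaining accessory parameters, of which there are $\frac12(n-1)(n+2l-2)=3=n$ --- exactly the number one apparent point supplies. With the one-dimensional normalization freedom this accounts for all $14$ coefficients, yielding uniqueness up to scaling.
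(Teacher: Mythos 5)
Your proposal is correct, and its skeleton matches the paper's: properties (i)--(iv) are verified on the equation produced by eliminating $\Psi_2(z)$, $\Psi_3(z)$, and the converse follows from the coefficient count of the paper's preliminary discussion specialised to $(n,l)=(3,1)$. Two genuine differences are worth recording. First, for (iv) the paper offers nothing beyond ``computation through eliminating'', whereas you argue structurally: evaluating \eqref{scalar} at $z=u$ (where $P_3$ vanishes) and at $z=u/q$ (where $P_0$ vanishes) gives two linear relations on the triple $\big(\Phi(u),\Phi(qu),\Phi\big(q^2u\big)\big)$, both of which must annihilate the image of the three-dimensional solution space of the matrix system \eqref{mn33}, regular along the $q$-orbit of $u$; proportionality of the two relations is exactly \eqref{gdef}. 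This buys conceptual clarity and avoids heavy algebra, but it silently requires a rank statement: writing $e_1$ for the first standard basis column vector, the vectors $e_1$, $A(u)e_1$, $A(u)A(qu)e_1$ must span exactly a plane --- rank three is excluded by $P_3(u)=0$, but rank one would break the proportionality conclusion --- so the computational fallback you mention should be retained. Your check that the $q$-Fuchs relation \eqref{qfuchs} reduces, for the data in (ii)--(iii), precisely to $b_1b_2b_3=c_1c_2c_3$ from \eqref{para} is a consistency test the paper does not record. Second, for the converse the paper does slightly more than count: it fixes an order of determination (the exponents at $z=0$ give $p_{31}'$, $p_{20}'$, $p_{10}'$; those at $z=\infty$ give $p_{22}'$, $p_{13}'$; property (iv) then gives $p_{21}'$, $p_{12}'$, $p_{11}'$; the normalization fixes $p_{04}'$). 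Your arithmetic $5+5+3+1=14$ is the same bookkeeping, but equality of the number of conditions and unknowns does not by itself yield uniqueness; the small observation that closes this gap --- and which makes the paper's ordered determination work --- is that the apparent-singularity relations \eqref{gdef}, for given $u$ and $v$, are linear inhomogeneous in the remaining undetermined coefficients, so they determine them (generically) uniquely.
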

\begin{proof}
The properties (i)--(iv) follows by computation through eliminating $\Psi_2(z)$, $\Psi_3(z)$ in \eqref{mn33}, \eqref{swk}.
The converse can be confirmed that coefficients $P_j(z)$ are defined uniquely by (i)--(iv) up to a normalization.
To see this, we consider the following equation which satisfies the properties (i), (ii):
\begin{gather}
L'(z)=P_3'(z)\Phi\big(q^3 z\big)+P_2'(z) \Phi\big(q^2 z\big)+P_1'(z)\Phi(q z)+P_0'(z)\Phi(z)=0,\label{lds}\\
P_3'(z)=p_{31}'(z-u),\nonumber\\
P_2'(z)=p_{22}'z^2+p_{21}'z+p_{20}',\nonumber\\
P_1'(z)=p_{13}'z^3+p_{12}'z^2+p_{11}'z+p_{10}',\nonumber\\
P_0'(z)=p_{04}'(z-u/q)(z+c_1)(z+c_2)(z+c_3).\nonumber
\end{gather}
From the property (iii), the condition of the exponents of solutions $\Phi(z)$ at $z=0$ determines the coefficients $p_{31}'$, $p_{20}'$, $p_{10}'$
and the condition of the exponents of solutions $\Phi(z)$ at $z=\infty$ determines the coefficients $p_{22}'$, $p_{13}'$.
The remaining coefficients except for $p_{04}'$ are determined by the property~(iv).
If we put the normalization factor $p_{04}'$ as $q^5 uvc_1c_2d_1d_2d_3$, the function~$L'(z)$~\eqref{lds} equals to the function $L(z)$ \eqref{scalar}.
\end{proof}

In the following, viewing $\Phi\big(q^iz\big)$ $(0 \leq i \leq 3)$ as parameters, we regard the scalar $q$-difference equation $L(z)=0$ \eqref{scalar} as an algebraic curve in variables $(u,v)$ $\in$ $\mathbb{P}^1\times \mathbb{P}^1$.
 We represent as the curve as $P(u,v)=0$.
The features of the curve are the following.
\begin{Lemma} The algebraic curve $P(u,v)=0$ has the following properties:
\begin{enumerate}\itemsep=0pt
\item[$(i)$] The polynomial $P(u,v)$ has the following form:
\[
P(u,v)=\displaystyle \sum_{0\leq j \leq 3\atop 0\leq i \leq 4-j}c_{i,j} u^iv^j, \qquad c_{0,0}:=c_0 z^2 \Phi(q z).
\]
The coefficients $c_{i,j}$ depend on $b_i$, $c_i$, $d_i$, $q$, $z$, $\Phi\big(q^iz\big)$.
\item[$(ii)$] It passes the following $8$ points:
\begin{gather*}
(u,v)=(0,qb_1d_1), \quad \big(0,q^2b_2d_2\big), \quad \big(0,q^2b_3d_3\big), \quad (qz,\infty), \\
\phantom{(u,v)=}{} \ (z,0), \quad(-c_1,0),\quad (-c_2,0),\qquad (-c_3,0),
\end{gather*}
and $3$ points in the coordinate $(r,s)=(u,v/u)$
\[
(r,s)=\big(\infty,q^2 d_1\big),\quad \big(\infty,q^2 d_2\big),\quad \big(\infty,q^2 d_3\big).
\]
\item[$(iii)$]
At $u=z$ the equation $P(u,v)=0$ has the following property:
\begin{equation}\label{prpt}
\displaystyle \bigg( \sum _{0 \leq i \leq 3-j} c_{i,j+1}z^i \bigg)\Big|_{z\to q z, \Phi(q^kz)\to \Phi(q^{k+1}z)}
=\sum _{0 \leq i \leq 4-j} c_{i,j}(qz)^i, \qquad j=0, 1, 2.
\end{equation}
\end{enumerate}
Conversely, the equation $P(u,v)=0$ is uniquely characterized by these properties $(i)$--$(iii)$ up to normalization factor $c_0$.
\end{Lemma}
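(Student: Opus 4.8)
The plan is to obtain the curve $P(u,v)=0$ by re-reading the scalar equation $L(z)=0$ of \eqref{scalar}, \eqref{pjz} as a relation between $u$ and $v$: I fix $z$, treat the four values $\Phi\big(q^jz\big)$ as constants, substitute the coefficients $P_j(z)$—determined as functions of $(u,v)$ and the parameters by the previous Lemma, with the normalization $p_{04}=q^5uvc_1c_2d_1d_2d_3$—and collect the outcome as a polynomial in $u$ and $v$. For property $(i)$ I would track the bidegree: the factor $z-u$ in $P_3(z)$ and the explicit factorization of $P_0(z)$ in \eqref{pjz} control the $u$-powers, while $v$ enters only through the apparent-singularity conditions \eqref{gdef}, read as $P_k(u)=v\,P_{k+1}(u/q)$ for $k=0,1,2$, which determine the remaining coefficients of the $P_j$ as polynomials in $v$; substituting these back into $L(z)$ and clearing denominators produces top $v$-degree $3$, so the Newton polygon is exactly the triangle $0\le j\le 3$, $0\le i\le 4-j$, and the $u^0v^0$-monomial is the claimed $c_{0,0}=c_0z^2\Phi(qz)$.

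For property $(ii)$ I would verify incidence by restricting $P(u,v)$ along the three natural slices and matching with the characterizing data of $L(z)=0$. The three points on $u=0$ are the roots of the cubic $P(0,v)$ and reproduce the exponents at $z=0$ recorded in the previous Lemma; the four points on $v=0$ are the roots of the quartic $P(u,0)$, namely the zero $u=z$ of $P_3(z)$ together with the zeros $-c_1,-c_2,-c_3$ of $P_0(z)$ coming from \eqref{pjz}; and in the chart $(r,s)=(u,v/u)$ the top-total-degree part of $P$ is a cubic in $s$ whose roots $q^2d_1,q^2d_2,q^2d_3$ encode the exponents at $z=\infty$, giving the three points at $r=\infty$, while the leading $v^3$-coefficient, linear in $u$, vanishes at $u=qz$ to give $(qz,\infty)$. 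Each check reduces to a fact already established for $L(z)=0$ or to a direct substitution into the explicit coefficients of the appendix.

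Property $(iii)$ I would deduce from the self-similarity of $L(z)=0$ under $z\mapsto qz$: replacing $z$ by $qz$ shifts every argument $\Phi\big(q^kz\big)\mapsto\Phi\big(q^{k+1}z\big)$, and placing the apparent singularity at the base point $u=z$ ties the coefficient of $v^{j+1}$ at $u=z$ to the coefficient of $v^{j}$ at $u=qz$; writing this link out yields \eqref{prpt} for $j=0,1,2$. For the converse I would run a dimension count over the $14$ coefficients $c_{i,j}$: fixing $c_0$ normalizes $c_{0,0}$; the cubic on $u=0$ then fixes $c_{0,1},c_{0,2},c_{0,3}$, the quartic on $v=0$ fixes $c_{1,0},\dots,c_{4,0}$, the hypotenuse cubic fixes $c_{3,1},c_{2,2},c_{1,3}$, and the three relations \eqref{prpt} fix the interior coefficients $c_{1,1},c_{2,1},c_{1,2}$, with $(qz,\infty)$ left as a consistency check relating $c_{0,3}$ and $c_{1,3}$; hence $P(u,v)$ is determined up to the scalar $c_0$.

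The step I expect to be the main obstacle is the interplay between property $(i)$ and the converse. Because the $c_{i,j}$ are themselves functions of $z$ and of the values $\Phi\big(q^kz\big)$, the relations \eqref{prpt} are functional rather than numerical identities, so I must both control exactly how $v$ propagates through the nested conditions \eqref{gdef} when establishing the $v$-degree $3$, and then check that the three shift-relations genuinely pin down $c_{1,1},c_{2,1},c_{1,2}$ independently of the incidence conditions of $(ii)$. By contrast the point-incidence checks themselves are routine, as they collapse to the exponent and zero data already established for $L(z)=0$.
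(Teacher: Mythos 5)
Your overall strategy is the same as the paper's: the forward direction is a computation from the construction of $L(z)=0$, and the converse is a count on the $14$ coefficients in which the normalization $c_{0,0}$ together with property (ii) fixes the ten boundary coefficients $c_{0,1},c_{0,2},c_{0,3}$, $c_{1,0},\dots,c_{4,0}$, $c_{3,1},c_{2,2},c_{1,3}$ (your slicing along $u=0$, $v=0$ and the chart $(r,s)=(u,v/u)$, with $(qz,\infty)$ redundant, is a correct reorganization of the paper's explicit parametrization), and property (iii) is reserved for the three interior coefficients $c_{1,1},c_{2,1},c_{1,2}$.

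The gap is the step you flag at the end and then leave open: that the three relations \eqref{prpt} ``genuinely pin down'' $c_{1,1}$, $c_{2,1}$, $c_{1,2}$. This is precisely the nontrivial content of the converse, and it cannot be waved through as a $3\times 3$ linear solve, because each relation compares coefficients of $P(u,v)$ as functions of $z$ with the same coefficients after the substitution $z\to qz$, $\Phi\big(q^kz\big)\to\Phi\big(q^{k+1}z\big)$. So \eqref{prpt} is a priori a system of three linear inhomogeneous $q$-difference equations in the six quantities $c_{1,1},c_{2,1},c_{1,2}$ and their shifts $c_{1,1}\big|_{z\to qz},c_{2,1}\big|_{z\to qz},c_{1,2}\big|_{z\to qz}$; a genuine difference system would determine its solution only up to solutions of the homogeneous system (a $q$-periodic ambiguity), which would ruin the uniqueness assertion. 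The paper closes exactly this hole by a triangular elimination showing the system is in fact algebraic: the $j=2$ relation involves only the single unshifted unknown $c_{1,2}$ (its left-hand side contains only the already-determined $v^3$-coefficients, shifted), so it gives $c_{1,2}$ as an explicit function of $z$; the $j=0$ relation involves only shifted unknowns and is solved for $c_{2,1}\big|_{z\to qz}$ as an affine function of $c_{1,1}\big|_{z\to qz}$, which after unshifting expresses $c_{2,1}$ through $c_{1,1}$; substituting both into the $j=1$ relation leaves one linear equation for $c_{1,1}$, and then $c_{2,1}$ follows. Supplying this ordering (or an equivalent argument) is what your proposal still needs; with it, your outline coincides with the paper's proof.
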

\begin{proof}
The properties (i)--(iii) follow for the polynomial $P(u,v)$.
Conversely, we consider a~polynomial \[P'(u,v):= \sum_{0\leq j \leq 3\atop 0\leq i \leq 4-j}c_{i,j}'u^iv^j , \qquad c_{0,0}'(z)=c_0' \Phi(qz).\]
The polynomial $P'(u,v)$ has $14$ coefficients.
From the property (ii), $10$ coefficients are described in terms of parameters $b_j$, $c_j$, $d_j$ and the coefficient $c_0'$ as follows:
\begin{gather*}
 P'(u,v)= c_0' z \Phi(q z) \frac{
 (u-qz) }{b_1 b_2
 b_3 d_1 d_2 d_3 q^6}v^3 \\
\hphantom{P'(u,v)=}{} + \bigg(c_{1,2}'u-c_0'
 \frac{z\Phi(q z)}{b_1
 b_2 b_3 d_1 d_2 d_3 q^4} \big((d_1
 +d_2 +d_3 )u^2(b_2 d_2 q
 +b_3 d_3 q +b_1 d_1 )z\big)\bigg)v^2\\
\hphantom{P'(u,v)=}{}+\bigg( c_{2,1}'u^2+c_{1,1}'u+c_0' z\Phi(q z)
\bigg( \bigg( \frac{1}{d_1}+\frac{1}{d_2}+\frac{1}{d_3} \bigg)
 \frac{u^3}{q^2b_1b_2b_3}\\
\hphantom{P'(u,v)=}{}-z \bigg( \frac{1}{b_1d_1q}+\frac{1}{b_2d_2q^2}+ \frac{1}{b_3d_3q^2} \bigg) \bigg)\bigg) v \\
\hphantom{P'(u,v)=}{}+c_0'z \Phi(q z)\frac{
 (c_1+u )
 (c_2+u )
 (c_3+
 u)
 (z-u)
 }{b_1 b_2
 b_3 }.
\end{gather*}
The remaining $3$ parameters $c_{1,1}'$, $c_{1,2}'$, $c_{2,1}'$ are determined in terms of parameters $b_i$, $c_i$, $d_i$, $q$, $z$, $\Phi\big(q^iz\big)$, $c_0'$
by the property (iii).
Namely, the property (iii) gives $3$ linear inhomogeneous equations among $c_{1,1}'$, $c_{1,2}'$, $c_{2,1}'$, $c_{1,1}'\big|_{z\to qz}$, $c_{1,2}'\big|_{z\to qz}$, $c_{2,1}'\big|_{z\to qz}$.
Though these relations are apparently $q$-difference equations, we can solve them algebraically.
For example, in the equation \eqref{prpt}, we solve $c_{12}'$ when $j=2$. Then when $j=0$, we solve $\left. c_{21}' \right| _{z \to qz, \Phi(q^kz)\to \Phi (q^{k+1}z)}$.
And finally solving~$c_{11}'$ when $j=1$, they algebraically can be solved.
\end{proof}

 Explicit forms of the coefficients $c_{i,j}(z)$ of the polynomial $P(u,v)$ are in Appendix \ref{b}.
\subsection[Relations among pairs of variables (f, g), (x, y) and (u, v)]{Relations among pairs of variables $\boldsymbol{(f, g)}$, $\boldsymbol{(x, y)}$ and $\boldsymbol{(u, v)}$} \label{taiou}
 \begin{Proposition}
Under the relations among variables $(f, g)$, $(x, y)$ and $(u, v)$:
\begin{gather}
f=\frac{c_1 c_2 c_3
 (b_2+x) (xy-1)}{b_1 b_2 y
 (c_1+x)
 (c_2+x)+c_3 x
 (b_2 (1-x y)+c_2 x y+c_1
 y
 (c_2+x)+x)},\nonumber \\
 g=-\frac{y (b_1 b_2+c_3 x)}{c_3 (b_2 (1-x y)+x)+b_1b_2 x y},\label{f2x}\\
x=\frac{K_1(u,v)K_2(u,v)}{L_1(u,v)}, \qquad
y=\frac{K_3(u,v)}{L_2(u,v)},\label{x2u}\\
K_1(u,v)=b_2 v \big(v-q^2 (b_2 d _2+ d _3 u)\big),\nonumber \\
K_2(u,v)=b_1 b_2\big(q^2 \big( d _3 u ( d _2 q^2 (c_3+u)-v)- d _2 v(b_2+u)\big)+v^2\big)\nonumber\\
\phantom{K_2(u,v)=}{} +c_1 c_2 d _3 q^2 \big(b_2 d _2 q^2(c_3+u)-c_3 v\big),\nonumber\\
K_3(u,v)=b_2 d _2 q^2 u \big( d _3 q^2 (c_3+u)-v\big),\nonumber\\
L_1(u,v)=q^2 \big(b_2^2 d _2
\big( d _2 q^2 \big( d _3 q^2 (c_1+u)
 (c_3+u)-u v\big) \big( d _3 q^2 (c_2+u)(c_3+u)-u v\big)\nonumber\\
 \phantom{L_1(u,v)=}{} +v \big( d _3 q^2 v (b_1 (2c_3+u)+u (c_3+2 u)) - d _3^2 q^4 u(b_1+u) (c_3+u)-u v^2\big)\big)\nonumber\\
 \phantom{L_1(u,v)=}{} -b_2^3 d _2^2q^2 v ( d _3 q^2 (b_1+u)(c_3+u )-u v)-b_2 c_3 d _3 v \big(b_1 v \big(v- d _3 q^2u\big)\nonumber\\
 \phantom{L_1(u,v)=}{} + d _2 q^2 \big( d _3 q^2 (c_2 u+c_1 (2c_2+u)) (c_3+u)-(c_1+c_2) uv\big)\big)+c_1 c_2 c_3^2 d _3^2 q^2 v^2\big),\nonumber\\
 L_2(u,v)=c_1 c_2 d _3 q^2 \big(b_2 d _2 q^2 (c_3+u)-c_3 v\big)+b_1 b_2
 v \big(v-q^2 (b_2 d _2+ d _3 u)\big),\nonumber
\end{gather}
the corresponding Lax equations \eqref{e6a} with \eqref{v2f}, \eqref{mn33} with \eqref{swk} and \eqref{mn33} with \eqref{udef}, \eqref{gdef} are equivalent.
Conversely, such relations among $(f,g)$, $(x,y)$ and $(u,v)$ are uniquely determined as \eqref{f2x}, \eqref{x2u}.
 \end{Proposition}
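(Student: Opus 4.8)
The plan is to split the asserted three-way equivalence into two independent two-way equivalences and treat them separately. The equivalence of \eqref{e6a}, \eqref{v2f} with \eqref{mn33}, \eqref{swk} under the relation \eqref{f2x} is exactly Proposition~\ref{fgxy}: the map \eqref{f2x} is identical to \eqref{crsp}, so this half requires no new work and I would merely cite it. The remaining content is the equivalence of the matrix picture \eqref{mn33}, \eqref{swk} in the variables $(x,y)$ with the scalar picture \eqref{mn33}, \eqref{udef}, \eqref{gdef} in the variables $(u,v)$, together with the explicit dictionary \eqref{x2u}.

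For the forward direction $(x,y)\mapsto(u,v)$ I would appeal to the very construction of the scalar equation. Eliminating $\Psi_2(z)$ and $\Psi_3(z)$ from \eqref{mn33}, \eqref{swk} yields the third order equation \eqref{scalar}, whose leading coefficient is $P_3(z)=p_{31}(z-u)$ with $u$ the rational function of $(x,y)$ recorded in \eqref{udef}; the apparent-singularity value $v$ is then read off from \eqref{gdef}, and through $u(x,y)$ and the coefficients $P_j(z)$ it too is a rational function of $(x,y)$. Hence the forward map is explicit and rational by construction, and for this half it suffices to have these formulas in hand.

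The converse is the heart of the proof: one must invert this map, i.e.\ solve the coupled system given by \eqref{udef} and by the $v$-relation \eqref{gdef} for $x$ and $y$, and identify the unique solution with \eqref{x2u}. I would organize this so as to avoid a blind resultant: the Lemma characterizing $L(z)=0$ shows that the scalar equation, hence the data $(u,v)$, determines the matrix \eqref{mn33} in the fixed gauge \eqref{swk} up to normalization, so the forward map is birational and its inverse is unique. To exhibit the inverse explicitly I would substitute the candidate \eqref{x2u} back into \eqref{udef} and into the $v$-relation and verify the two rational identities $u\equiv u$ and $v\equiv v$ in $(u,v)$ and the parameters; since the forward map is already birational, checking that \eqref{x2u} is a right inverse is enough to conclude it is the inverse, which gives both the equivalence and the uniqueness asserted in the statement.

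The main obstacle is the sheer size of this last elimination. The polynomials $I_k,J_k$ in \eqref{udef}, the coefficients $P_j(z)$ feeding \eqref{gdef}, and the blocks $K_i,L_i$ of \eqref{x2u} are all of high degree, so the verification is unwieldy by hand. I would tame it by first solving the lower-degree of the two relations for one variable, substituting into the other, and systematically clearing the recurring common factors before comparing; in practice this step is best carried out with the aid of computer algebra. The global uniqueness of the whole dictionary then follows by combining the uniqueness already contained in Proposition~\ref{fgxy} with the birationality established here, so that \eqref{f2x} and \eqref{x2u} are the only relations compatible with all three Lax forms.
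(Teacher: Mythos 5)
Your proposal is correct and takes essentially the same route as the paper: the paper likewise splits the three-way statement into the two dictionaries, disposes of the $(f,g)\leftrightarrow(x,y)$ half via the relation \eqref{crsp} of Proposition~\ref{fgxy} (which is verbatim \eqref{f2x}), verifies the $(x,y)\leftrightarrow(u,v)$ half by direct (computer-assisted) substitution of \eqref{x2u} against \eqref{udef} and \eqref{gdef}, and settles uniqueness by the rigidity of the gauge-fixed Lax matrices. Your extra scaffolding --- the right-inverse check combined with birationality drawn from the characterization Lemma --- is a more explicit organization of the same verification that the paper compresses into ``we can check'' and ``obvious from the form of the Lax matrix.''
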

\begin{proof}
Using the relation \eqref{f2x}, we can check that the equation \eqref{e6a} with \eqref{v2f} is equivalent to \eqref{mn33} with \eqref{swk}.
Similarly, using the relation \eqref{x2u}, we can check that the equation \eqref{mn33} with \eqref{swk} is equivalent to \eqref{mn33} with \eqref{udef}, \eqref{gdef}.
The converse is obvious from the form of the Lax matrix.
\end{proof}

\section{Continuous limit} \label{blc}
In this section, we describe a relation between our result and the result of Boalch \cite{boalch09}.
In \cite{boalch09}, a Lax pair for the additional-difference Painlev\'e equation with affine Weyl symmetry group of type $E_6$ was described.
The linear differential equation of the Lax pair is as follows
\begin{equation}\label{e6fuchs}
\frac{d}{dz}\Psi(z)=\Psi(z) \left( \frac{A_1^b}{z}+\frac{A_2^b}{z-1} \right) ,\qquad A_3^b:=-\big(A_1^b+A_2^b\big),
\end{equation}
where the matrices $A_i^b$ ($1\leq i \leq 3$) are $3 \times 3$ matrices with different eigenvalues.

We show that the linear $q$-difference equation \eqref{lax}
reduces to the equation \eqref{e6fuchs} via a continuous limit $q\to 1$.
The equation \eqref{lax} takes the following form after a scale transformation~$z \to -z$ and gauge transformations
 \begin{gather}
 (1-z)\Psi(qz)=\Psi(z) \mathcal{A}(z) , \qquad \mathcal{A}(z) =\begin{bmatrix}
 b_1d_1 & k_1 & v_1 \\
 0 & b_2d_2 & k_2 \\
 0 & 0 & b_3d_3
\end{bmatrix}
-\begin{bmatrix}
 d_1 & 0 & 0 \\
 v_2 & d_2 & 0 \\
 v_3 & v_4 & d_3
 \end{bmatrix}z, \nonumber\\
|\mathcal{A}(z)|=d_1d_2d_3 (z-c_1)(z-c_2)(z-c_3),\label{e6a2}
 \end{gather}
 where $k_j$ ($j=1$, $2$) are constants.
We put $q=e^h$ and consider the limit $h \to 0$.
We set
\begin{gather}
b_i=q^{\beta_i},\qquad c_i=q^{\gamma_i},\qquad d_i=q^{\delta_i},\qquad 1\leq i \leq 3,\nonumber \\
k_j=h l_j, \qquad j=1,2, \qquad
v_m=hu_m,\qquad 1\leq m \leq 4,\label{q21}
\end{gather}
where $l_j$ are constants.
 By using Taylor's expansion for \eqref{e6a2}, \eqref{q21}
\begin{gather*}
\text{(l.h.s.)}= (1-z) \Psi(z)+h(1-z)z \frac{\rm d}{{\rm d}z} \Psi(z)+\mathcal{O}\big(h^2\big),
\\
\text{(r.h.s.)}= (1-z) \Psi(z) \\
\hphantom{\text{(r.h.s.)}=}{} +h
\Psi(z)
\begin{bmatrix}
\beta _1-\delta _1 (z-1) & l_1
 & u_1 \\
 -u_2 z & \beta _2-\delta _2
 (z-1) & l_2 \\
 -u_3 z & -u_4 z & \beta
 _3-\delta _3 (z-1)
 \end{bmatrix}
 +\mathcal{O}\big(h^2\big),
\end{gather*}
we find the following limit as $h \to 0$:
\begin{gather}
\frac{\rm d}{{\rm d}z} \Psi(z)=\Psi(z) \left( \frac{A_1}{z}+\frac{A_2}{z-1} \right) ,\nonumber \\
A_1=\begin{bmatrix}
\beta _1+\delta _1 & l_1 & u_1 \\
 0 & \beta _2+\delta _2 & l_2 \\
 0 & 0 & \beta _3+\delta _3
 \end{bmatrix},\qquad
A_2=\begin{bmatrix}
-\beta _1 & -l_1 & -u_1 \\
 u_2 & -\beta _2 & -l_2 \\
 u_3 & u_4 & -\beta _3
\end{bmatrix},\nonumber \\
A_3:=-(A_1+A_2)
=\begin{bmatrix}
\delta _1 & 0 & 0 \\
 u_2 & \delta _2 & 0 \\
 u_3 & u_4 & \delta _3
\end{bmatrix},\label{q21e}
\end{gather}
where eigenvalues of the matrix $A_2$ are $\gamma_i$ ($1 \leq i \leq 3$) by the condition of the determinant of the matrix $\mathcal{A}(z)$ \eqref{e6a2}.
Therefore, the linear $q$-difference equation~\eqref{lax} reduces to the equation~\eqref{e6fuchs} via a continuous limit $q\to 1$.

In the following, we consider a continuous limit $q \to 1$ of the result in Section~\ref{pd}.
In Section~\ref{pd},
through a compatibility condition of the equations \eqref{ryrt1}, we derived a standard $q$-Painlev\'e equation of type $E_6$.
We take the following equation as a deformation equation for the differential equation \eqref{q21e} which is rewritten version of \eqref{deq1}:
\begin{gather}
T\Psi(z)=B(z)\Psi(z),\qquad
 B(z)=
 \begin{bmatrix}
 w_1 & w_2 & w_3 \\
 0 & 0 & 0 \\
 0 & 0 & 0
 \end{bmatrix}+
 \begin{bmatrix}
 w_4 & 0 & 0 \\
 1 & 0 & 0 \\
 w_5 & w_6 & w_7
 \end{bmatrix}z, \nonumber\\
 |B(z)|= (w_3 w_6-w_2 w_7) z^2, \nonumber\\
T\colon\ (\beta_1,\beta_2,\beta_3,\gamma_1,\gamma_2,\gamma_3, \delta_1,\delta_2, \delta_3)
 \to (\beta_1-1, \beta_2+1, \beta_3,\gamma_1,\gamma_2,\gamma_3, \delta_1+1,\delta_2,\delta_3+1).\label{sch}
\end{gather}
Solving a compatibility condition for the equation \eqref{q21e}, \eqref{sch}, we obtain the following additional-difference Painlev\'e equation of type $E_6$ \cite{kny17,rgtt01}:
\begin{gather*}
(f+g)\big(\overline{f}+g\big)=\frac{ (g+\gamma _1 ) (g+\gamma _2 ) (g+\gamma_3 ) (g+\beta
 _1+\delta _1-\delta _2 )}{ (g+\beta _2+1 )
 (g+\beta _3-\delta _2+\delta _3+1 )},\\
 \big(\overline{f}+\overline{g}\big)\big(\overline{f}+g\big)=\frac{\big(\overline{f}-\gamma_1\big)\big(\overline{f}-\gamma_2\big)\big(\overline{f}-\gamma_3\big)
 \big(\overline{f}-\beta_1-\delta_1+\delta_2\big)}{\big(\overline{f}-\beta_1+\beta_2-1\big)\big(\overline{f}-\beta_1-\delta_1+\delta_3+1\big)},
\end{gather*}
where
\begin{gather*}
f=\beta _3+\frac{u_1 u_4}{l_1}, \qquad
g=\frac{l_1 l_2}{u_1}-\beta _2,
\end{gather*}
and $\overline{*}$ stands for $T(*)$.
From the above, we derive the additional-difference Painlev\'e equation of type $E_6$ solving a compatibility condition of the Lax pair via a continuous limit $q \to 1$.
\appendix

\section[Deformations T\_1, T\_2 and T\_2\^{}3 on root variables]{Deformations $\boldsymbol{T_1}$, $\boldsymbol{T_2}$ and $\boldsymbol{T_2^3}$ on root variables} \label{rvt12}
 In this appendix,
we show that how the deformations $T_1$, $T_2$ and $T_2^3$ act on root variables.

We consider a pair of root basis of $\{ \alpha_i\}$ ($i=0,1,\dots, 6$) and $\{ \delta_j \}$ ($j=0,1,2$) as symmetry type \smash{$E_6^{(1)}$} and surface type \smash{$A_2^{(1)}$}, respectively.
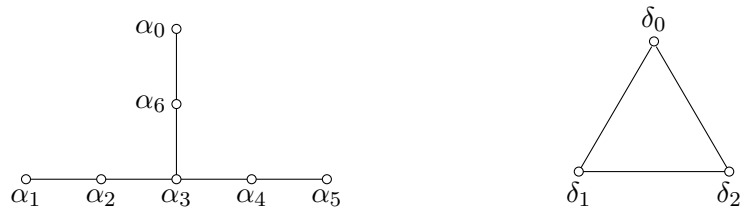
\begin{figure}[h] \hspace{80pt}
\begin{minipage}[t]{300pt}
\begin{tikzpicture}
\draw(0,0)circle(0.06)node[below]{$ \alpha_3$};
	\draw(-1,0)circle(0.06)node[below]{$\alpha_2$};
	\draw(-2,0)circle(0.06)node[below]{$\alpha_1$};
	\draw(1,0)circle(0.06)node[below]{$\alpha_4$};
	\draw(2,0)circle(0.06)node[below]{$\alpha_5$};
	\draw(0,1)circle(0.06)node[left]{$\alpha_6$};
	\draw(0,2)circle(0.06)node[left]{$\alpha_0$};
	\draw(0-0.06,0)--(-1+0.06,0);
	\draw(-2+0.06,0)--(-1-0.06,0);
	\draw(0+0.06,0)--(1-0.06,0);
	\draw(2-0.06,0)--(1+0.06,0);
	\draw(0,0+0.06)--(0,1-0.06);
	\draw(0,1+0.06)--(0,2-0.06);
\end{tikzpicture}
\hspace{70pt}
\begin{tikzpicture}
\draw(0,0)circle(0.06)node[below]{$\delta_1$};
\draw(2,0)circle(0.06)node[below]{$\delta_2$};
\draw(1,{sqrt(3)})circle(0.06)node[above]{$\delta_0$};
\draw(0.03,{0.03*sqrt(3)})--(1-0.03,{0.96*sqrt(3)});
\draw(0.06,0)--(2-0.06,0);
\draw(1+0.03,{0.96*sqrt(3)})--(2-0.03,{0.03*sqrt(3)});
\end{tikzpicture}
\caption{The Dynkin diagram of $E_6^{(1)}$ and $A_2^{(1)}$.}
\end{minipage}
\end{figure}

 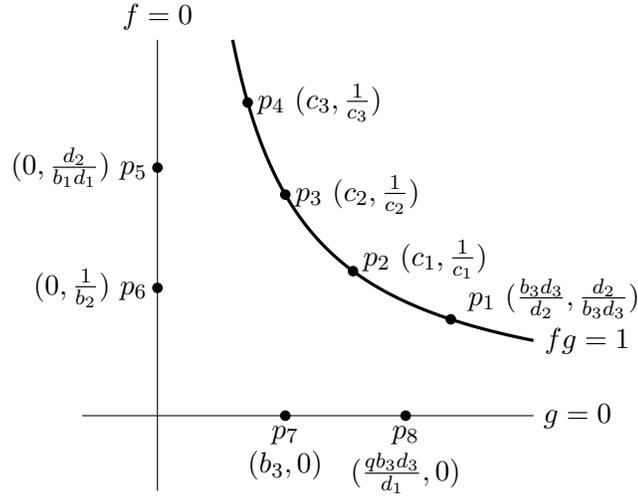
\begin{figure}[t]
\centering
\begin{tikzpicture}
 \path[draw] (-1.0,0) -- (5.0,0)node[right]{$g=0$};
 \path[draw] (0,-1.0) -- (0,5.0)node[above]{$f=0$};
 \draw[very thick, samples=100, domain=1:5]plot(\x,{5/(\x)})node[right]{$fg=1$};
 \fill(1.2, 5/1.2)circle[radius=2pt];
 \draw(1.2, 5/1.2)node[right]{$p_4$ $(c_3,\frac{1}{c_3})$};
 \fill(1.7, 5/1.7)circle[radius=2pt];
 \draw(1.7, 5/1.7)node[right]{$p_3$ $(c_2,\frac{1}{c_2})$};
 \fill(2.6, 5/2.6)circle[radius=2pt];
 \draw(2.6, 5/2.4)node[right]{$p_2$ $(c_1,\frac{1}{c_1})$};
 \fill(3.9, 5/3.9)circle[radius=2pt];
 \draw(4, 5/3.2)node[right]{$p_1$ ($\frac{b_3d_3}{d_2},\frac{d_2}{b_3d_3}$)};
 \fill(0, 1.7)circle[radius=2pt];
 \draw(0, 1.7)node[left]{$(0,\frac{1}{b_2})$ $p_6$};
 \fill(0, 3.3)circle[radius=2pt];
 \draw(0, 3.3)node[left]{$(0,\frac{d_2}{b_1d_1})$ $p_5$};
 \fill(1.7,0)circle[radius=2pt];
 \draw(1.7,0)node[below]{$p_7$};
 \draw(1.7,-1/3)node[below]{$(b_3,0)$};
 \fill(3.3,0)circle[radius=2pt];
 \draw(3.3,0)node[below]{$p_8$};
 \draw(3.3,-1/2.8)node[below]{$(\frac{qb_3d_3}{d_1},0)$};
 \end{tikzpicture}
 \caption{A point configuration of the equation \eqref{e6pade}.}
 \label{fgconf}
 \end{figure}

 \begin{figure}[t]\centering
 \begin{tikzpicture}
 \path[draw] (-1.0,0) -- (5.0,0)node[right]{$y=0$};
 \path[draw] (0,-1.0) -- (0,5.0)node[above]{$x=0$};
 \draw[very thick, samples=100, domain=1:5]plot(\x,{5/(\x)})node[right]{$xy=1$};
 \fill(1.2, 5/1.2)circle[radius=2pt];
 \draw(1.2, 5/1.2)node[right]{$q_4$ $(-c_{2},-\frac{1}{c_{2}})$};
 \fill(1.7, 5/1.7)circle[radius=2pt];
 \draw(1.7, 5/1.7)node[right]{$q_3$ $(-c_{1},-\frac{1}{c_{1}})$};
 \fill(2.6, 5/2.6)circle[radius=2pt];
 \draw(2.6, 5/2.3)node[right]{$q_2$ $(-\frac{b_1d_1}{d_2},-\frac{d_2}{b_1d_1})$};
 \fill(3.9, 5/3.9)circle[radius=2pt];
 \draw(4, 5/3.3)node[right]{$q_1$ $(-\frac{b_1b_2}{c_3},-\frac{c_3}{b_1b_2})$};
 \fill(0, 1.7)circle[radius=2pt];
 \draw(0, 1.7)node[left]{$(0,-\frac{1}{b_1})$ $q_6$};
 \fill(0, 3.3)circle[radius=2pt];
 \draw(0, 3.3)node[left]{$(0,-\frac{b_2d_2}{c_1c_2d_3})$ $q_5$};
 \fill(1.7,0)circle[radius=2pt];
 \draw(1.7,0)node[below]{$q_7$};
 \draw(1.7,-1/3)node[below]{$(-b_2,0)$};
 \fill(3.3,0)circle[radius=2pt];
 \draw(3.3,0)node[below]{$q_8$};
 \draw(3.7,-1/3)node[below]{$(-\frac{b_1b_2d_1}{qc_3d_3},0)$};
 \end{tikzpicture}
 \caption{A point configuration \eqref{xypoint}.}\label{xyconf}
\end{figure}
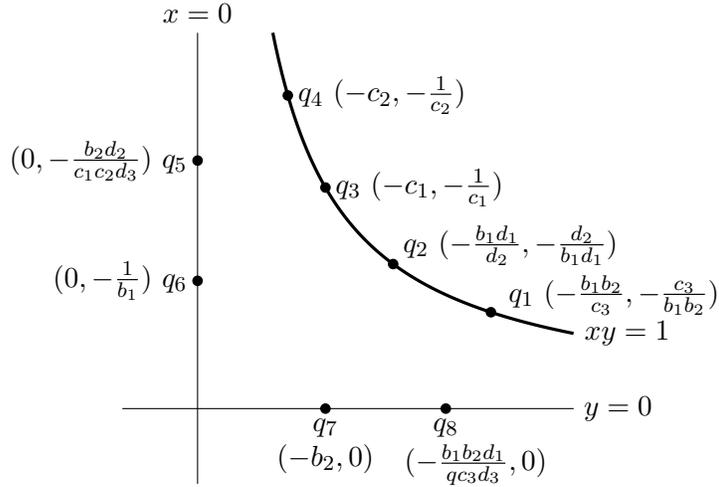

Pictures of a point configuration of the equation \eqref{e6pade} and the configuration \eqref{xypoint} are presented in Figures~\ref{fgconf} and \ref{xyconf}, respectively.
From these pictures, we take a pair of root basis $\{ \alpha_i\}$ and $\{ \delta_j\}$ as symmetry type $E_6^{(1)}$ and surface type $A_2^{(1)}$ as follows:
\begin{gather}
\alpha_0=E_7-E_8,\qquad \alpha_1=E_6-E_5, \qquad\alpha_2=H_2-E_1-E_6, \qquad \alpha_3=E_1-E_2,\nonumber\\
\alpha_4=E_2-E_3,\qquad \alpha_5=E_3-E_4,\qquad \alpha_6=H_1-E_1-E_7,\nonumber\\
\delta_0=H_1+H_2-E_1-E_2-E_3-E_4,\qquad
\delta_1=H_1-E_5-E_6,\qquad
\delta_2=H_2-E_7-E_8, \nonumber\\
\delta=\alpha_0+\alpha_1+2\alpha _2+3 \alpha_3 +2 \alpha_4 + \alpha_5+2 \alpha_6=\delta_0+\delta_1+\delta_2,\label{root}
\end{gather}
where we stand for $\delta$ as a null root.
The choice of the above root basis is the same as in \cite{kny17}.
\subsection[A deformation T\_1 on root variables]{A deformation $\boldsymbol{T_1}$ on root variables}
 We take variables $a_i$ ($i=0,1,\dots,6$) as root variables attached to the root $\alpha_i$ \eqref{root} associated with a point configuration in coordinate $(f,g)$ (see Figure~\ref{fgconf})
 \begin{gather}
a_0=\frac{qd_3}{d_1},\qquad a_1=\frac{b_1d_1}{b_2d_2},\qquad
a_2=\frac{b_2d_2}{b_3d_3},\qquad
a_3=\frac{b_3d_3}{c_1d_2},\nonumber\\
a_4=\frac{c_1}{c_2},\qquad
a_5=\frac{c_2}{c_3},\qquad
a_6=\frac{d_2}{d_3},\label{rv1}
\end{gather}
which satisfy $q=a_0a_1a_2^2a_3^3a_4^2a_5a_6^2$.
Then we have the following statement.

\begin{Proposition} The action $T_1$ \eqref{deq1} on the root variables $a_i$ in \eqref{rv1} is given by the translation
\begin{equation}\label{t1root}
T_1(a_0,a_1,a_2,a_3,a_4,a_5,a_6)=
(a_0,a_1,qa_2,a_3,a_4,a_5,a_6/q).
\end{equation}
\end{Proposition}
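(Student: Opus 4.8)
The plan is to verify directly that the parameter action $T_1$ given in \eqref{deq1}, when expressed through the root variables defined in \eqref{rv1}, produces exactly the translation \eqref{t1root}. Since each $a_i$ is an explicit monomial in the parameters $b_j$, $c_i$, $d_j$, and since $T_1$ acts on those parameters by the prescribed rescalings $(b_1,b_2,b_3)\to(b_1,qb_2,b_3/q)$, $(c_1,c_2,c_3)\to(c_1,c_2,c_3)$, $(d_1,d_2,d_3)\to(qd_1,d_2,qd_3)$, the computation is purely a matter of substituting and tracking powers of $q$.

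First I would record how each root variable transforms under the parameter shifts of \eqref{deq1}. For $a_0=qd_3/d_1$ the map sends $d_3\to qd_3$ and $d_1\to qd_1$, so the factors of $q$ cancel and $a_0$ is fixed. For $a_1=b_1d_1/(b_2d_2)$ the parameters go to $b_1(qd_1)/(qb_2 \cdot d_2)$, and again the $q$ from the numerator cancels the $q$ from the denominator, leaving $a_1$ invariant. For $a_2=b_2d_2/(b_3d_3)$ one gets $(qb_2)d_2/\bigl((b_3/q)(qd_3)\bigr)=q\,a_2$, giving the single multiplication by $q$. For $a_3=b_3d_3/(c_1d_2)$ the substitution yields $(b_3/q)(qd_3)/(c_1d_2)=a_3$, so it is fixed. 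The variables $a_4=c_1/c_2$ and $a_5=c_2/c_3$ involve only the $c_i$, which are untouched by $T_1$, hence both are fixed. Finally $a_6=d_2/d_3$ becomes $d_2/(qd_3)=a_6/q$, producing the division by $q$.

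Collecting these computations gives precisely $T_1(a_0,a_1,a_2,a_3,a_4,a_5,a_6)=(a_0,a_1,qa_2,a_3,a_4,a_5,a_6/q)$, which is the asserted translation. As a consistency check I would also confirm that the normalization $q=a_0a_1a_2^2a_3^3a_4^2a_5a_6^2$ stated after \eqref{rv1} is preserved: under the new root variables the only changes are $a_2^2\to q^2 a_2^2$ and $a_6^2\to a_6^2/q^2$, so the product $a_2^2 a_6^2$ is unchanged and the relation $q=a_0a_1a_2^2a_3^3a_4^2a_5a_6^2$ continues to hold, as it must for a translation in the affine root lattice.

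There is essentially no analytic obstacle here; the statement is a bookkeeping identity once the definition \eqref{rv1} and the parameter action \eqref{deq1} are both in hand. The only point requiring minor care is that $T_1$ also acts on the dependent variables $v_1$, $v_4$ (equivalently $f$, $g$), but these do not enter the monomials \eqref{rv1}, so the root-variable action depends solely on the parameter part of \eqref{deq1}. The real content, which has already been established earlier in the excerpt, is that this particular parameter shift is indeed a symmetry of the $q$-$E_6^{(1)}$ equation via Theorem on \eqref{e6pade}; once that is granted, identifying it as the affine Weyl translation \eqref{t1root} is the direct substitution above.
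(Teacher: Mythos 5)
Your proof is correct and is essentially the paper's own argument: the paper's proof is the one-line statement ``Applying \eqref{deq1}, we obtain the desired result \eqref{t1root},'' i.e., exactly the direct substitution of the parameter shifts into the monomials \eqref{rv1} that you carried out explicitly. Your additional check that the normalization $q=a_0a_1a_2^2a_3^3a_4^2a_5a_6^2$ is preserved is a harmless bonus not present in the paper.
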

\begin{proof}Applying \eqref{deq1}, we obtain the desired result \eqref{t1root}.
\end{proof}

\subsection[Deformations T\_2 and T\_2\^{}3 on root variables]{Deformations $\boldsymbol{T_2}$ and $\boldsymbol{T_2^3}$ on root variables}
The deformation $T_2$ \eqref{deq2} is not a translation on parameters $b_i$, $c_i$, $d_j$ but a deformation $T_2^3$ gives a translation on them
\begin{gather}\label{t2to3}
 T_2^3\colon\ (b_1,b_2,b_3,c_1,c_2,c_3,d_1,d_2,d_3) \to \left(qb_1,qb_2,b_3,c_1,c_2,q^3 c_3,\frac{d_1}{q},\frac{d_2}{q},\frac{d_3}{q}\right).
\end{gather}
We show that how the deformations $T_2$ \eqref{deq2} and $T_2^3$ \eqref{t2to3} on root variables.

We take variables $a_i'$ ($i=0,1,\dots, 6$) as root variables attached to the root $\alpha_i$ \eqref{root} associated with a point configuration in coordinate $(x,y)$ (see Figure~\ref{xyconf})
\begin{gather}
a_0'=\frac{qc_3d_3}{b_1d_1},\qquad
a_1'=\frac{b_1b_2d_2}{c_1c_2d_3},\qquad
a_2'=\frac{b_2}{c_3},\qquad
a_3'=\frac{c_3d_1}{b_2d_2},\nonumber\\
a_4'=\frac{c_1d_2}{b_1d_1},\qquad
a_5'=\frac{c_2}{c_1},\qquad
a_6'=\frac{b_1}{c_3},\label{rv2}
\end{gather}
which satisfy $q=a_0'a_1'{a_2'}^2{a_3'}^3{a_4'}^2{a_5'}{a_6'}^2$.
Then we have the following statement.

\begin{Proposition}The actions $T_2$ \eqref{deq2} and $T_2^3$ on the root variables $a_i'$ in \eqref{rv2} are given as follows:
\begin{gather}
T_2(a_0',a_1',a_2',a_3',a_4',a_5',a_6')
=\left(\frac{q}{a_6'},{a_0'}^2a_1'a_2'a_3'{a_6'}^2,
\frac{1}{a_0'a_3'a_6'},\frac{q}{a_2'	},\frac{a_0'a_2'a_3'a_4'a_6'}{q},a_5',\frac{a_2'a_3'a_6'}{q}\right),\nonumber \\
T_2^3 (a_0',a_1',a_2',a_3',a_4',a_5',a_6')
=\left(a_0'q^2,a_1'q^3,
\frac{a_2'}{q^2},a_3'q^2,\frac{a_4'}{q},a_5',\frac{a_6'}{q^2}\right).\label{t2root}
\end{gather}
\end{Proposition}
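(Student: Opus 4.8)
The plan is to prove both identities by direct substitution, exactly as in the proof of the preceding proposition for $T_1$, the one new feature being that $T_2$ is not a translation on the parameters, so its action on the root variables mixes them nontrivially. First I would record the action of $T_2$ on the parameters from \eqref{deq2}, namely $b_1\mapsto b_1d_1/d_2$, $b_2\mapsto b_2d_2/d_3$, $c_3\mapsto qc_3$, $d_1\mapsto d_2$, $d_2\mapsto d_3$, $d_3\mapsto d_1/q$ (with $b_3,c_1,c_2$ fixed), and substitute these into each of the seven defining expressions \eqref{rv2}. Since every $a_i'$ is a Laurent monomial in $b_j,c_i,d_j,q$, each $T_2(a_i')$ comes out as another Laurent monomial after cancellation; for example $T_2(a_0')=qc_3/b_1=q/a_6'$ and $T_2(a_2')=b_2d_2/(qc_3d_3)=1/(a_0'a_3'a_6')$.

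The genuine work is the second step: re-expressing each transformed parameter monomial back in terms of the original root variables $a_0',\dots,a_6'$ (and $q$). For $T_1$ this was immediate because the translation only rescaled individual variables by powers of $q$; here, because $T_2$ realizes a rotation of the extended Dynkin diagram composed with a translation (its cube being the pure translation \eqref{t2to3}), the correct combinations involve genuine products such as $a_0'a_2'a_3'a_4'a_6'$, and identifying them is where care is needed. I would carry this out by inverting the monomial map $(b_j,c_i,d_j)\mapsto(a_i')$ given by \eqref{rv2} up to the single freedom encoded by the null-root relation $q=a_0'a_1'(a_2')^2(a_3')^3(a_4')^2a_5'(a_6')^2$, and then matching monomials. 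As a built-in consistency check I would verify that the seven outputs again satisfy this relation: since $T_2$ fixes $q$ and preserves the relation identically in the parameters, one must have $q=T_2(a_0')\,T_2(a_1')\,(T_2(a_2'))^2\cdots(T_2(a_6'))^2$, and imposing this pins down the representation uniquely and catches any slip in the $q$-bookkeeping.

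For the second identity I would first confirm that $T_2^3$ acts on the parameters as in \eqref{t2to3} by composing \eqref{deq2} three times, a short bookkeeping computation yielding $b_1\mapsto qb_1$, $b_2\mapsto qb_2$, $c_3\mapsto q^3c_3$, and $d_j\mapsto d_j/q$. I would then obtain the $T_2^3$ action on the $a_i'$ in two independent ways, by substituting \eqref{t2to3} directly into \eqref{rv2} and by cubing the $T_2$ formula just established; agreement of the two cross-checks the algebra. Because \eqref{t2to3} is a genuine translation, each $T_2^3(a_i')$ should reduce to $a_i'$ times a power of $q$, which makes this part far more transparent than the $T_2$ case.

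The main obstacle I anticipate is the re-expression and inversion step for $T_2$: one must recognize the right Laurent-monomial combinations of $a_0',\dots,a_6'$, and since the map from parameters to root variables is one-to-one only modulo the null-root relation, the distribution of $q$-powers among the seven components has to be fixed consistently via that relation. Everything else is routine cancellation of monomials.
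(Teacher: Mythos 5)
Your strategy is exactly the paper's: the paper's entire proof of this proposition is the single sentence ``Applying \eqref{deq2} and \eqref{t2to3}, we obtain the desired results \eqref{t2root}'', i.e., substitute the parameter action into the monomials \eqref{rv2} and re-express the result in the $a_i'$. Your two sample computations ($T_2(a_0')=qc_3/b_1=q/a_6'$ and $T_2(a_2')=b_2d_2/(qc_3d_3)=1/(a_0'a_3'a_6')$) are correct, and your plan of verifying \eqref{t2to3} by composing \eqref{deq2} three times also goes through as printed.

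However, executed faithfully, your plan does \emph{not} reproduce the proposition as printed, and your own built-in consistency check is precisely what detects this. Direct substitution gives
\begin{gather*}
T_2(a_1')=\frac{T_2(b_1)\,T_2(b_2)\,T_2(d_2)}{T_2(c_1)\,T_2(c_2)\,T_2(d_3)}
=\frac{(b_1d_1/d_2)(b_2d_2/d_3)\,d_3}{c_1c_2\,(d_1/q)}=\frac{qb_1b_2}{c_1c_2}=a_0'a_1'a_2'a_3'a_6',
\end{gather*}
which differs from the printed ${a_0'}^2a_1'a_2'a_3'{a_6'}^2$ by the factor $a_0'a_6'=qd_3/d_1$; correspondingly direct substitution of \eqref{t2to3} gives $T_2^3(a_1')=q^2a_1'$, not $q^3a_1'$. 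Your null-root check flags exactly this: with the printed entries one finds
\begin{gather*}
T_2(a_0')\,T_2(a_1')\,T_2(a_2')^2\,T_2(a_3')^3\,T_2(a_4')^2\,T_2(a_5')\,T_2(a_6')^2=q\,a_0'a_6'\neq q,
\end{gather*}
whereas with $T_2(a_1')=a_0'a_1'a_2'a_3'a_6'$ the product equals $q$, as it must, since the relation $q=a_0'a_1'{a_2'}^2{a_3'}^3{a_4'}^2a_5'{a_6'}^2$ holds identically in the parameters both before and after the substitution (the printed $T_2^3$ row likewise yields $q^2$ instead of $q$, while $q^2a_1'$ restores it). So your checks are not optional polish; they catch what appears to be a typo in the statement, and a faithful write-up along your lines ends with the corrected $a_1'$ entries rather than the printed ones.

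One small correction to your framing of the re-expression step: no choice ``modulo the null-root relation'' is involved. The seven monomials \eqref{rv2} are multiplicatively independent in $(b_j,c_i,d_j,q)$, so a Laurent monomial either lies in the group they generate or it does not, and when it does its expression in the $a_i'$ is unique; the null-root relation is not a relation among the $a_i'$ but the statement that $q$ itself belongs to that group. Hence the exponent-matching linear system you propose has a unique solution, which makes the bookkeeping entirely mechanical.
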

\begin{proof}
Applying \eqref{deq2} and \eqref{t2to3}, we obtain the desired results \eqref{t2root}.
\end{proof}

\section{Explicit forms of coefficients in Section~\ref{scle6}}\label{b}

In this appendix, we give explicit forms of $P_j(z)$ \eqref{pjz} and the coefficients $c_{ij}$ $(0\leq j \leq 3,\allowbreak 0\leq i \leq 4-j)$ of the polynomial $P(u,v)$ in variables $u$ and $v$.

Explicit forms of $P_j(z)$ \eqref{pjz} are as follows:
\begin{gather*}
P_3(z)=p_{31}(z-u),\\
P_2(z)=p_{22}z^2+p_{21}z+p_{20},\\
P_1(z)=p_{13}z^3+p_{12}z^2+p_{11}z+p_{10},\\
P_0(z)=-P_3(qz)d_1d_2d_3(z+c_1)(z+c_2)(z+c_3),
\end{gather*}
where the coefficients $p_{j,k}$ $(1 \leq j \leq 3,\, 0 \leq k\leq 3)$ are
\begin{gather*}
p_{31}= -q^2 u
 v c_1 c_2,\\
 p_{22}= q^4 u v c_1
 c_2
 (d_1+d_2+d_3 ), \\
 p_{21}= -q u v c_1 c_2 \big(u d_3 q^3+ (qu-b_2 ) d_2 q^2-b_3 d_3 q^2-vq+v+ \big(q^3 u-q b_1 \big)d_1\big), \\
 p_{20}= -q^2 u^2 v c_1 c_2 (b_1 d_1+q
 b_2 d_2+q b_3 d_3 ), \\
 p_{13}= -q^5 u v c_1 c_2
 (d_1d_2+d_2 d_3+d_3d_1
 ), \\
 p_{12}= -q \big(u b_1 b_2 b_3
 (u+c_2 ) d_1 d_2 d_3 q^5+u
 c_1^2 c_2 (u+c_2 ) d_1 d_2
 d_3 q^5 \\
\hphantom{p_{12}=}{} +c_1 \big(u^2 c_2^2 d_1 d_2
 d_3 q^5+u b_1 b_2 b_3 d_1 d_2 d_3
 q^5 \\
 \hphantom{p_{12}=}{} -c_2 \big(q d_1 \big(q u \big(q
 u d_2 \big(-u d_3 q^2+v q+v \big)+v
 (q (q+1) u
 d_3-v ) \big) \\
\hphantom{p_{12}=}{} -b_1 \big(v-q^2
 b_2 d_2 \big) \big(v-q^2 b_3
 d_3 \big)\big)+v \big(d_2 \big(u
 (q (q+1) u d_3-v ) \\
\hphantom{p_{12}=}{} +b_2
 \big(q^2 b_3 d_3-v \big)\big)
 q^2+v \big(v-q^2 (u+b_3 )
 d_3 \big)\big)\big)\big)\big),\\
 p_{11}= u
 \big(u b_1 b_2 b_3 (u+c_2 )
 d_1 d_2 d_3 q^6+u c_1^2 c_2
 (u+c_2 ) d_1 d_2 d_3 q^6\\
\hphantom{p_{11}=}{}+c_1
 \big(u^2 c_2^2 d_1 d_2 d_3 q^6+u b_1
 b_2 b_3 d_1 d_2 d_3 q^6 \\
\hphantom{p_{11}=}{} -c_2 \big(q
 d_1 \big(b_1 \big(b_2 d_2 \big(-b_3
 d_3 q^3+v q+v \big) q^2+v \big(q^2
 (q+1) b_3 d_3-v \big) \big) \\
\hphantom{p_{11}=}{} -q u
 \big(v-q^2 u d_2 \big) \big(v-q^2 u
 d_3 \big)\big)+v \big(d_2 \big(u
 \big(q^2 u d_3-v \big)\\
 \hphantom{p_{11}=}{} +b_2 \big(q^2
 (q+1) b_3 d_3-v \big)\big) q^2
 +v
 \big(v-q^2 (u+b_3 )
 d_3 \big)\big)\big)\big)\big),\\
 p_{10}= q^3 u^2 v c_1
 c_2 (q b_2 b_3 d_2 d_3+b_1b_2d_1 d_2+b_1b_3
 d_1d_3 ).
\end{gather*}
We give also explicit forms of the coefficients $c_{ij}$ $(0\leq j \leq 3, \, 0 \leq i+j \leq 4)$ of the polynomial $P(u,v)$ in variables $u$ and $v$:
\begin{gather*}
c_{01}= -c_0 \frac{ z^2}{q^2}
 \left(\frac{q}{b_1 d_1}+\frac{1}{b_2
 d_2}+\frac{1}{b_3 d_3}\right)
 \Phi(q z), \\
 c_{02}= c_0
 \frac{ z^2 (b_1 d_1+q
 b_2 d_2+q b_3 d_3) \Phi(q
 z)}{q^4 b_1 b_2 b_3 d_1 d_2
 d_3}, \qquad
 c_{03}= -c_0 \frac{
 z^2 \Phi(q z)}{q^5 b_1 b_2 b_3
 d_1 d_2 d_3}, \\
 c_{10}= c_0
 \left(\frac{z^2}{c_1}+\frac{z^2}{c_2}+\frac{z^2}{c_3} -z\right) \Phi(q z), \\
 c_{11}= -c_0z \bigg(\frac{(c_1+z) (c_2+z) (c_3+ z)\Phi(z)}{b_1 b_2 b_3 q} \\
\hphantom{c_{11}=}{}- \bigg( \frac{1}{b_1d_1q}+\frac{1}{b_3d_3q^2}+\frac{1}{b_2d_2q^2}+\frac{1}{b_1d_1q^2}+\frac{1}{b_3d_3q^3}+\frac{1}{b_2d_2q^3}\\
\hphantom{c_{11}=}{} +\frac{z^2}{b_1b_2b_3d_3q}
 +\frac{z^2}{b_1b_2b_3d_2q}
 +\frac{z^2}{b_1b_2b_3d_1q}\bigg) \Phi(qz) \\
 \hphantom{c_{11}=}{} + \bigg(
 \frac{d_1+d_2+d_3}{q^2 b_1b_2b_3d_1d_2d_3}z
 -\frac{b_1d_1+q b_2d_2+b_3d_3}{q^4 b_1b_2b_3 d_1d_2d_3}
 \bigg) \Phi\big(q^2z\big) -\frac{1}{b_1 b_2 b_3 d_1 d_2 d_3 q^4}\Phi \big(q^3 z\big) \bigg), \\
 c_{12}= c_0z \bigg( \bigg(
 \frac{d_1+d_2+d_3}{q^3 b_1b_2b_3 d_1d_2d_3}z
 -\frac{b_1d_1+qb_2d_2+qb_3d_3}{q^5 b_1b_2b_3d_1d_2d_3}
 \bigg) \Phi(qz) +\frac{(-1+q)}{q^5b_1b_2b_3d_1d_2d_3} \Phi\big(q^2z\big)\bigg),
 \\
 c_{13}= c_0z \frac{
 \Phi(q z)}{q^6 b_1 b_2 b_3 d_1
 d_2 d_3}, \\
 c_{20}= c_0z \bigg(
 \bigg(
 \frac{1}{c_1c_2}+\frac{1}{c_2c_3}+\frac{1}{c_3c_1}
 \bigg)z-
 \bigg( \frac{1}{c_1}+\frac{1}{c_2}+\frac{1}{c_3}
 \bigg)
 \bigg)
 \Phi(qz), \\
 c_{21}= c_0 \bigg(
 \frac{(c_1+z) (c_2+z) (c_3+ z)}{b_1 b_2 b_3 q^2} \Phi(z) \\
\hphantom{c_{21}=}{} - \bigg( \frac{(1+q)(d_1d_2+d_2d_3+d_3d_1)}{q^2b_1b_2b_3d_1d_2d_3}+
 \frac{1}{q^3b_3d_3}+\frac{1}{q^3b_2d_2}+\frac{1}{q^2b_1d_1} \bigg) \Phi(qz) \\
\hphantom{c_{21}=}{} +\bigg( \frac{d_1+d_2+d_3}{q^2b_1b_2b_3d_1d_2d_3 }z+
 \frac{b_1d_1 +q b_2d_2+qb_3d_3}{q^4 b_1b_2b_3d_1d_2d_3}\bigg)\Phi\big(q^2z\big)
 \bigg),\\
 c_{22}=
 -c_0 \frac{
 (d_1+d_2+d_3) \Phi(q
 z)}{q^4 b_1 b_2 b_3 d_1 d_2
 d_3}z\Phi(q
 z), \\
 c_{30}= c_0 \bigg( \frac{z^2}{c_1c_2c_3}-\bigg( \frac{1}{c_1c_2}+\frac{1}{c_2c_3}+\frac{1}{c_3c_1} \bigg) z
 \bigg) \Phi(qz),\\
 c_{31}= \frac{(d_1 d_2+d_2 d_3+d_3 d_1) z }{b_1 b_2 b_3
 d_1 d_2 d_3 q^2}\Phi(q z), \qquad
 c_{40}= -c_0 \frac{ z \Phi(q z)}{b_1
 b_2 b_3}.
\end{gather*}

\subsection*{Acknowledgements}

The author would like to express her gratitude to Professor Yasuhiko Yamada for valuable suggestions and encouragement.
And the author is grateful to referees for giving helpful comments to improve the manuscript.
She also thanks supports from JSPS KAKENHI Grant Numbers 17H06127 and 26287018 for the travel expenses in accomplishing this study.

\pdfbookmark[1]{References}{ref}
\LastPageEnding

\end{document}